\date{\today}
\newcommand{\EQ}{\begin{equation}}
\newcommand{\EN}{\end{equation}}
\newtheorem{rem}{Remark}
\newtheorem{theo}{Theorem}
\newtheorem{deff}{Definition}
\newtheorem{coro}{Corollary}
\newtheorem{lem}{Lemma}
\newtheorem{conj}{Conjecture}
\newcommand{\pr}{\indent{\bf Proof. \ }}
\newcommand{\wt}{\mbox{\rm wt}}
\newcommand{\bx}{{\bf x}}
\newcommand{\by}{{\bf y}}
\newcommand{\bz}{{\bf z}}
\newcommand{\supp}{\mbox{\rm supp}}
\newcommand{\F}{{\mathbb F}}
\newcommand{\PG}{{\mathbb PG}}
\def\bibleftdelim{}
\def\bibrightdelim{.}
\def\@biblabel#1{\bibleftdelim #1\bibrightdelim}
\title{ On two-weight codes}
\author{P. Boyvalenkov$^*$, K. Delchev\footnote{Institute of Mathematics and Informatics,
Bulgarian Academy of Sciences, 8 G. Bonchev Str., 1113  Sofia, Bulgaria
(e-mail: peter@math.bas.bg, math\_k\_delchev@yahoo.com).},
D. V. Zinoviev$^\dagger$, V. A. Zinoviev\footnote{A.A. Kharkevich
Institute for Problems of Information Transmission, Russian
Academy of Sciences, Bol'shoi Karetnyi per. 19, GSP-4, Moscow,
101447, Russia (e-mail: \{dzinov,vazinov\}@iitp.ru)}}
\date{}
\begin{document}
\maketitle

\begin{abstract}
We consider $q$-ary (linear and nonlinear) block codes with
exactly two distances: $d$ and $d+\delta$. We derive necessary
conditions for existence of such codes (similar to the known conditions in
the projective case). In the linear (but not necessary projective) case, we prove
that under certain conditions the existence of such linear
$2$-weight code with $\delta > 1$ implies the following equality
of greatest common divisors: $(d,q) = (\delta,q)$.  Upper bounds for
the maximum cardinality of such codes are derived by linear
programming and from few-distance spherical codes. Tables of lower
and upper bounds for small $q = 2,3,4$ and $q\,n < 50$ are
presented.
\end{abstract}

{\bf Key words.} Two-distance codes, Two-weight codes, Orthogonal arrays, Bounds
for codes, Linear programming bounds, Linear two-weight codes, Spherical codes

{\bf AMSsubject classification.} 94B65, 94B05

\section{Introduction}

Let $E_q = \{0,1, \ldots, q-1\}$, where $q \geq 2$ is a positive integer.
Any subset $C \subseteq E_q^n$ is called a code and
denoted by $(n,N,d)_q$; i.e., a code of length $n$, cardinality
$N = |C|$ and minimum (Hamming) distance $d$. For linear codes we use
notation $[n,k,d]_q$ (i.e., $N=q^k$). An $(n,N,d)_q$ code $C$ is
equidistant if for any two distinct codewords $x$ and $y$ we have
$d(x,y) = d$, where $d(x,y)$ is the (Hamming) distance between $x$
and $y$. A code $C$ is constant weight and denoted $(n,N,w,d)_q$
if every codeword is of weight $w$. For the binary case, i.e. when
$q=2$ we omit $q$ and use the notations $(n,N,d)$ and $[n,k,d]$.

We consider codes with only two distances $d$ and $d+\delta$. Such
codes have been a classical object in algebraic coding theory for
more than 50 years. A comprehensive survey of such
linear projective (i.e. when $n \leq (q^k-1)/(q-1)$) codes
can be found in the paper of Calderbank and Kantor \cite{CK86}.
Nevertheless in spite of many infinite classes of two-weight codes
the complete classification of such codes is far from being
completed. Our goal here is to understand the structure of
arbitrary (i.e. not only projective) two-weight codes and to
consider the general properties of all
such codes. We believe that for many possible values of $\delta$
such (linear and nonlinear) codes of dimensions larger than 2 do
not exist. In particular, we prove that if there exist a $q$-ary
linear code $C$ with two distances $d$ and $d+\delta$
where $\delta > 1$, then either $(q,d) = (q,\delta)$ or
$(q,d_c) = (q,\delta)$, where $d_c$ is the minimum distance
of complementary code $C_c$ with two distances $d_c$ and $d_c + \delta$,
which coexists with the code $C$. This generalizes previous results
of Delsarte for projective codes to arbitrary linear two-weight codes
\cite{Del72}. The case $\delta = 1$ was considered in our previous paper
\cite{BDZZ20}, where we classified all linear codes with distances $d$
and $d+1$ (see also \cite{LRS19}), derived upper bounds for the maximum possible cardinality in this case
and presented tables for the maximal possible cardinality for small alphabets and lengths.
Here we also give lower and upper bounds for maximum cardinality of codes
with two distances $d$ and $d+\delta$ and give tables of such linear and nonlinear codes.

Denote by $(n,N,\{d, d+\delta\})_q$ an $(n,N,d)_q$ code $C
\subset E_q^n$ with the property under investigation: for any two distinct
codewords $x$ and $y$ from $C$ we have $d(x,y) \in \{d,
d+\delta\}$.  We are interested in existence,  constructions,
and classification results and lower and upper bounds on the maximal
possible size of $(n,N,\{d, d+\delta\})_q$ codes. If $q$ is a prime
power, then $E_q$ will be the set of the elements of the Galois
field $\F_q$. In this case, if an $(n,N,d)_q$-code $C$ is a $k$-dimensional subspace
of the linear space $\F_q^n$, then we use for $C$ the standard notation
$[n,k,d]_q$, where $N = q^k$, and a two-weight $(n,N,\{d, d+\delta\})_q$-code
$C$ will be denoted as $[n,k,\{d, d+\delta\}]_q$.

\begin{deff}
A two-weight $(n,N,\{d, d+\delta\})_q$-code $C$ is called trivial,
if it satisfies one of the following properties:\\
(1) $C$ contains trivial positions, i.e. all its codewords
contain the same symbol in some position;\\
(2) $C$ can be presented as a concatenation of several two-weight
codes with the same parameters:
\[
C = (C_1\,|\,\cdots \,|\,C_s) = \{c^{(i)}=(c^{(i)}_1\,|\,\cdots \,|\,c^{(i)}_s):\,
c^{(i)}\in C,\;c^{(i)}_j \in C_j,\; j = 1, \ldots , s,\;i \in \{1,\ldots, N\}\},
\]
where every code $C_j$ is a two-weight code
with the same parameters, i.e. $C$ is an $(sn,N,\{sd, sd + s\delta\})_q$-code,
where $C_j$ is an $(n,N,\{d, d + \delta\})_q$-code for every $j \in \{1, \ldots, s\}$.
\end{deff}

Recall that all linear codes with two distances $d$ and $d+1$ were
described in our recent papers \cite{BDZZ18, BDZZ20} (see also
\cite{LRS19}). The above mentioned detailed survey of Calderbank
and Kantor \cite{CK86} gives the complete state (for that time) of
this subject mostly in terms of geometric concepts. Here we use
mostly combinatorial and coding-theoretical arguments.

\section{Preliminary results}

For two codes $A$ and $B = \{\by_j:\;j = 0,1,\ldots, N_b-1\}$ with
parameters $(n_a,N_a,d_a)_{q_a}$ and $(n_b,N_b,d_b)_{q_b}$, such
that $E_{q_b} \subseteq E_{q_a}$ and $N_b = q_a$, define a new
code $C$ over $E_{q_b}$ (which is called a {\em concatenated code},
or a {\em concatenation of $A$ and $B$}), such that
\[
C = \{ (\by_{x_1},\by_{x_2}, \cdots, \by_{x_{n_a}}):\;\bx =
(x_1,x_2, \cdots,x_{n_a}) \in A\},
\]
where every symbol $i \in E_{q_a}$ of codewords of $A$ we change by
codewords  $\by_i$ of $B$ (with index $i$).
The code $C$ has parameters $[n,N,d]_q$, where
\EQ\label{eq:2.0}
n~=~n_a\,n_b,\;\;d \geq d_a\,d_b,\;\;N~=~N_a,\;\;q=q_b~.
\EN

\begin{deff}\label{deff:dm}
Let $G$ be an abelian group of order $q$ written additively. A square
matrix $D$ of order $q\mu$ with elements from $G$ is called a difference matrix and denoted
$D(q, \mu)$, if the component-wise difference of any two different
rows of $D$ contains any element of $G$ exactly
$\mu$ times.
\end{deff}

See \cite{BJL86} for difference matrices. From \cite{SZZ69} we
have the following result.

\begin{lem}\label{lm:2.2}
For any prime number $p$ and any natural numbers $\ell$ and $h$ there
exists a difference matrix $D(p^\ell, p^h)$.
\end{lem}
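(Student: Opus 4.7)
The plan is to build $D$ explicitly from the field structure of $F := \mathbb{F}_{p^{\ell+h}}$, using a single linear surjection onto the group $G$. I would identify $G$ with the additive group of $\mathbb{F}_p^{\ell}$ (an abelian group of the required order $p^\ell$), and fix an $\mathbb{F}_p$-linear surjection $\phi : F \to G$. Such a $\phi$ exists because $\dim_{\mathbb{F}_p} F = \ell+h \geq \ell = \dim_{\mathbb{F}_p} G$; concretely, after choosing an $\mathbb{F}_p$-basis of $F$, one may take $\phi$ to be the projection onto any $\ell$ of its coordinates.

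The candidate matrix $D$ of order $p^{\ell+h}$ would then have its rows and columns indexed by the elements of $F$, with entry
\[
D_{r,c} \;:=\; \phi(r\cdot c),
\]
where $r\cdot c$ denotes the field product in $F$. To verify the difference-matrix property, I would fix distinct rows $r_1 \ne r_2$ of $F$ and set $r := r_1-r_2 \ne 0$. Using $\mathbb{F}_p$-linearity of $\phi$, the componentwise difference of the two rows is
\[
D_{r_1,c} - D_{r_2,c} \;=\; \phi\bigl((r_1-r_2)c\bigr) \;=\; \phi(rc).
\]
Since $r\ne 0$, multiplication by $r$ is a bijection of $F$, so as $c$ ranges over $F$, $rc$ does too. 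Hence the multiset of differences in this pair of rows is exactly $\{\phi(x):x\in F\}$; because $\phi$ is a surjective $\mathbb{F}_p$-linear map with kernel of dimension $h$, each $g\in G$ is attained exactly $p^h$ times, which is the defining property of $D(p^\ell,p^h)$.

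There is no essential obstacle in this argument; the only point needing care is the choice of algebraic framework. Working with the single field $\mathbb{F}_{p^{\ell+h}}$ (rather than, say, a direct sum $\mathbb{F}_{p^{\ell}}\oplus\mathbb{F}_{p^{h}}$ that is not a field when both $\ell,h>0$) is what makes multiplication by the nonzero scalar $r$ a bijection while simultaneously being compatible with the $\mathbb{F}_p$-linearity of $\phi$. Once those two structures are aligned, the computation above is a one-line verification.
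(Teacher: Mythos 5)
Your construction is correct and complete: indexing rows and columns by $\mathbb{F}_{p^{\ell+h}}$, setting $D_{r,c}=\phi(rc)$ for an $\mathbb{F}_p$-linear surjection $\phi$ onto $\mathbb{F}_p^{\ell}$, and using that multiplication by $r_1-r_2\neq 0$ permutes the field while every fiber of $\phi$ is a coset of the $h$-dimensional kernel, gives exactly the $D(p^{\ell},p^{h})$ property. Note that the paper does not prove this lemma at all --- it is quoted from the reference [SZZ69] (Semakov, Zinoviev, Zaitsev) --- so there is no in-paper argument to compare against; your argument is the standard one for difference matrices over elementary abelian groups and stands on its own.
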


Clearly the difference matrix $D(q, \mu)$ induces an equidistant $(q\mu-1,
q\mu, \mu(q-1))_q$ code  and also a nonlinear two-weight  code with the
following parameters \cite{SZZ69}:
\EQ\label{eq:2.1}
n = q\mu,\;\;N = q^2\mu,\;\; w_1 = (q - 1)\mu,\;\;w_2 = n.
\EN
These codes are optimal
according to the following $q$-ary Gray-Rankin bound
\cite{BDZH06,DHZ04}. Any $q$-ary $(n,N,d)_q$-code, whose
codewords can be partitioned into trivial subcodes $(n,q,n)_q$ (we
call such codes antipodal), has cardinality $N$ such that
\EQ\label{eq:2.3}
\frac{N}{q} \leq  \frac {q(qd - (q-2)n)(n-d)}{n-((q-1)n-qd)^{2}},
\EN
under condition that $n-((q-1)n-qd)^{2} > 0$. Note that this
bound is a $q$-ary analog
of the following classical Gray-Rankin bound for a binary
antipodal $(n,N,d)$-code $C$
\EQ\label{eq:2.30}
N \leq \frac{8d(n-d)}{n - (n-2d)^2}\,.
\EN

\section{Necessary conditions}

The natural question for existence of a $q$-ary  two-weight
$(n,N,\{d, d+\delta\})_q$-code is under which conditions such code
exist, if we fix, for example, a prime power $q$, the minimum
distance $d$ and cardinality $N$. The full answer for this question
is open.

Let $\PG(n,q)$ denote the $n$-dimensional projective space over the
field $\F_q$. A {\em $m$-arc} of points in $\PG(n,q)$,\, $m \geq n+1$ and
$n \geq 2$, is a set $M$ of $m$ points such that no $n+1$ points of
$M$ belong to a hyperplane of $\PG(n,q)$. The $(q+1)$-arc of $\PG(2,q)$
are called {\em ovals}, the $(q+2)$-arcs of $\PG(2,q)$,\,$q$ even, are called
{\em complete ovals}  or {\em hyperovals} (see, for example, \cite{Den69,Tha95}).

A linear code $C$ is called {\em projective} if its dual code
$C^\perp$ has minumum distance $d^\perp \geq 3$ (i.e., any
generator matrix of $C$ does not contain two columns that are
scalar multiples of each other). For projective  $[n,k,d]_q$-codes $C$
one can define the concept of {\em complementary code} (see, for
example, \cite{CK86}).

Let $[C]$ denote the matrix formed by all codewords of $C$ (i.e., the rows of $[C]$ are codewords of $C$).
The code $C_c$ is called a complementary of $C$, if the matrix
$[[C]\,|\,[C_c]]$ is a linear equidistant code and $C_c$ has the
minimal possible length which gives this property.

The extension of this well known concept to arbitrary linear two-weight
codes is formulated as the following evident lemma. Somehow, it is
connected with {\em anticodes} due to Farrel \cite{Far70} and {\em minihypers}
(see \cite{HH01}).

\begin{lem}\label{lm:4.1}
Let $C$ be a $q$-ary linear nontrivial two-weight $[n,k,\{d, d+\delta\}]_q$-code
and let $\mu_1$ and $\mu_2$ denote the number of codewords of weight $d$
and $d + \delta$, respectively. Then there exist the complementary
linear two-weight $[n_c,k,\{d_c, d_c+\delta\}]_q$-code $C_c$, where
\[
n + n_c = s\,\frac{q^k-1}{q-1},\;\;d + d_c + \delta = s q^{k-1},\;\;s=1,2, \ldots ,
\]
and where $C_c$ contains $\mu_1$ codewords of weight $d_c+\delta$ and
$\mu_2$ codewords of weight $d_c$ and where $C_c$ is of minimal possible
length, such that the matrix $[[C]\,|\,[C_c]]$ is an equidistant code.
\end{lem}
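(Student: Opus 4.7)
The plan is to view $C$ through its generator matrix as a multiset of points in $\PG(k-1,q)$ and to construct $C_c$ by multiset complementation inside a suitable repetition of the whole projective space. Fix a generator matrix $G$ of $C$ and, after coordinatewise scaling (a monomial transformation preserving the weight distribution), assume each column of $G$ is a chosen representative of a point of $\PG(k-1,q)$. Let $\mathcal{M}$ be the resulting multiset of $n$ projective points and let $m$ be the maximal multiplicity occurring in $\mathcal{M}$. For an integer $s\geq m$, denote by $s\cdot\Pi$ the multiset in which every point of $\PG(k-1,q)$ appears exactly $s$ times, and set $\mathcal{M}_c:=s\cdot\Pi\setminus\mathcal{M}$; this is a well-defined multiset of size $n_c=s(q^k-1)/(q-1)-n$. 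Let $G_c$ be the $k\times n_c$ matrix whose columns are representatives of $\mathcal{M}_c$ and let $C_c$ be the linear code it generates.

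The key identity is that the juxtaposed matrix $[G\mid G_c]$ generates an equidistant code. Indeed, for any nonzero $\bv\in\F_q^k$ the hyperplane $\bv^\perp$ contains exactly $(q^{k-1}-1)/(q-1)$ projective points, so
\[
\wt(\bv\,[G\mid G_c])\;=\;s\cdot\frac{q^k-1}{q-1}\;-\;s\cdot\frac{q^{k-1}-1}{q-1}\;=\;s\,q^{k-1}.
\]
Consequently $\wt(\bv G)+\wt(\bv G_c)=sq^{k-1}$ for every $\bv\in\F_q^k$. The two weights $d,d+\delta$ of $C$ therefore correspond to the two weights $sq^{k-1}-d=d_c+\delta$ and $sq^{k-1}-(d+\delta)=d_c$ of $C_c$, with the frequencies $\mu_1$ and $\mu_2$ swapped as claimed, and the identities $n+n_c=s(q^k-1)/(q-1)$ and $d+d_c+\delta=sq^{k-1}$ are immediate.

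The remaining point, and the only real obstacle, is to ensure that $C_c$ has full dimension $k$ and that $n_c$ is then minimal. Full dimension amounts to $\mathcal{M}_c$ spanning $\F_q^k$: for $s\geq m+1$ every projective point still appears in $\mathcal{M}_c$ at least once, so spanning is automatic; in the borderline case $s=m$ spanning can fail only if the set of multiplicity-$m$ points of $\mathcal{M}$ covers the entire complement of some hyperplane of $\PG(k-1,q)$, and then $s=m+1$ serves instead. Since $n_c$ is an increasing function of $s$ (once $s\geq m$), minimality of $n_c$ is equivalent to choosing $s$ as the smallest integer meeting both the containment $\mathcal{M}\subseteq s\cdot\Pi$ and the spanning condition. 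This completes the verification, the rest being an immediate consequence of the simplex-code weight identity displayed above.
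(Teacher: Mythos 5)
The paper offers no argument for this lemma at all --- it is introduced as ``the following evident lemma'' --- so your write-up is not competing with a proof in the text but supplying one, and it supplies the right one. The multiset-of-points picture, the hyperplane count giving $\wt(\bv\,[G\mid G_c])=sq^{k-1}$, and the resulting swap of the multiplicities $\mu_1,\mu_2$ are exactly the standard mechanism behind complementary two-weight codes, and your treatment of the borderline case $s=m$ is a genuine improvement on the paper's own remark that $s$ equals the maximal column multiplicity: for instance the ``affine'' code whose columns are the $q^{k-1}$ points off a hyperplane has $m=1$ but its $s=1$ complement is a hyperplane, so $d_c=0$ and one must pass to $s=m+1$, precisely as you say. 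The only step you leave informal is the minimality claim: you argue that $n_c$ is minimized by minimizing $s$ \emph{within your construction}, but to conclude that no complementary code of a different shape does better you need that every linear equidistant code of dimension $k$ is (up to zero columns) a replication of the simplex code, i.e.\ that any valid $[G\mid G_c]$ has column multiset $s\cdot\Pi$; this is Bonisoli's theorem and should be cited or proved rather than assumed. With that one reference added, your proof is complete and is considerably more careful than what the paper provides.
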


Note that the integer $s$ in the Lemma \ref{lm:4.1} is the maximal
size of the collection of columns in the generator matrix of $C$
which are scalar multiples of one column.
For projective two-weight codes (i.e. for the case $s=1$) the
following results are known.

\begin{lem}\label{lm:4.0} $\cite{Del72}$
Let $C$ be a two-weight projective $[n,k,\{w_1, w_2\}]_q$ code over
$\F_q$,\,$q = p^m$,\, $p$ is prime. Then there exist two integers
$u \geq 0$ and $h \geq 1$,
such that
\EQ\label{eq:3.00}
w_1 = h\,p^u,\;\;w_2 = (h+1)\,p^u.
\EN
\end{lem}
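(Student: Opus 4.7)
The plan is to pass to the geometric description of $C$ as a point set in $\PG(k-1,q)$ and then extract $p$-divisibility from MacWilliams-type identities combined with a character-sum argument.

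First, projectivity means the columns of a generator matrix of $C$ form a set $\mathcal{P}\subset\PG(k-1,q)$ of $n$ distinct points, and for any nonzero $f\in(\F_q^k)^\ast$ the weight of the codeword $c_f=(f(p_1),\dots,f(p_n))$ equals $n-|\mathcal{P}\cap\ker f|$. Hence the two weights $w_1<w_2$ correspond to exactly two hyperplane intersection numbers $a>b$ with $w_1=n-a$, $w_2=n-b$, and $\delta=w_2-w_1=a-b$, so it suffices to show that $a-b$ is a prime power $p^u$ dividing $w_1$. With $N_a,N_b$ counting hyperplanes of each type, standard double counting of incidences $(p,H)$ and $(p,p',H)$ with $p,p'\in\mathcal{P}$ and $H$ a hyperplane through them yields
\begin{align*}
N_a+N_b&=\tfrac{q^k-1}{q-1},\\
aN_a+bN_b&=n\,\tfrac{q^{k-1}-1}{q-1},\\
a(a-1)N_a+b(b-1)N_b&=n(n-1)\,\tfrac{q^{k-2}-1}{q-1},
\end{align*}
which are equivalent to the MacWilliams identities under the constraints $A_1^\perp=A_2^\perp=0$ forced by projectivity.

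Second, the key step is showing $\delta$ is a prime power. I would consider the Cayley graph $\Gamma$ on $(\F_q^k,+)$ with connection set $\F_q^\ast\cdot\mathcal{P}$; a standard character-sum calculation shows $\Gamma$ is strongly regular with two nontrivial eigenvalues $qa-n$ and $qb-n$. Since $q=p^m$ and $\Gamma$ is a Cayley graph on an elementary abelian $p$-group, these eigenvalues are Gauss-type sums over $\F_q$, whose $p$-adic valuations are controlled by Stickelberger's theorem; equivalently, this is Delsarte's reformulation of McEliece's divisibility theorem for codes over $\F_{p^m}$. The resulting $p$-adic constraints force $q(a-b)$ to equal $q$ times a $p$-power, so $a-b=p^u$ for some $u\geq 0$.

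Finally, once $\delta=p^u$ is in hand, $p^u\mid w_1$ follows by a congruence argument. Reducing the second counting identity modulo $p^u$ and using $a\equiv b\pmod{p^u}$ together with the first identity gives $b\cdot\tfrac{q^k-1}{q-1}\equiv n\cdot\tfrac{q^{k-1}-1}{q-1}\pmod{p^u}$. Since $q-1$ is coprime to $p$ and $u<km$, a short $p$-adic computation (using that $q^j\equiv 0\pmod{p^u}$ once $jm\geq u$) simplifies this to $a\equiv n\pmod{p^u}$, whence $w_1=n-a\equiv 0\pmod{p^u}$; setting $h:=w_1/p^u$ gives $w_1=hp^u$ and $w_2=(h+1)p^u$, as claimed. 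The main obstacle is the prime-power conclusion for $\delta$: the counting identities alone only show that $(a-b)$ divides an explicit integer depending on $n,q,k$, and excluding composite differences genuinely requires the Stickelberger/McEliece $p$-adic input specific to codes over $\F_{p^m}$.
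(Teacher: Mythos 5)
First, note that the paper does not prove Lemma~\ref{lm:4.0} at all: it is quoted as a known result of Delsarte \cite{Del72}, so there is no internal proof to compare against. Judged on its own, your proposal is correct in its frame but has a genuine gap at the one step that carries the entire content of the lemma. Your geometric setup (two hyperplane intersection numbers $a>b$ with $w_i=n-a,\,n-b$), the three incidence-counting identities, and the closing congruence argument deducing $p^u\mid w_1$ once $\delta=p^u$ is known are all fine. What is missing is the proof that $\delta$ is a power of $p$. You outsource this to ``Gauss-type sums'' and ``Stickelberger/McEliece,'' but that appeal does not work as stated: the eigenvalues of your Cayley graph are not Gauss sums --- the character sum collapses elementarily to the ordinary integers $qa-n$ and $qb-n$ (equivalently, the paper's observation in \eqref{eq:3.13} that $\sqrt{\Delta}=q\delta$ is automatically an integer), so Stickelberger's theorem on $p$-adic valuations of Gauss sums has nothing to act on, and McEliece's theorem gives a lower bound on the $p$-adic valuation of all weights, not the statement that the \emph{difference} of the two weights is an exact power of $p$. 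Your own last sentence concedes that composite values of $\delta$ are not excluded; that exclusion is precisely the lemma.

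The classical mechanism (Delsarte's, reproduced in \cite{CK86}) is worth recording because it is elementary once set up. The graph $\Gamma$ is a translation strongly regular graph on the elementary abelian group $\F_p^{km}$; Fourier inversion applied to the indicator of the connection set $D$ shows that the level set $D^*=\{\chi:\widehat{1_D}(\chi)=\rho_1\}$ is the connection set of a dual translation strongly regular graph whose nontrivial eigenvalues are $\rho_1^*=(q^k-|D|+\rho_2)/(\rho_1-\rho_2)$ and $\rho_2^*=(-|D|+\rho_2)/(\rho_1-\rho_2)$. These are sums of $p$-th roots of unity that are rational, hence rational integers, and they satisfy $(\rho_1-\rho_2)(\rho_1^*-\rho_2^*)=q^k=p^{km}$. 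Since both factors are positive integers, $\rho_1-\rho_2=q\delta$ divides $p^{km}$, so $\delta$ is a power of $p$; your final paragraph then finishes the proof. If you replace your second step by this duality argument (or any equivalent one), the write-up becomes a complete proof; as it stands, the central claim is asserted rather than proved.
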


For the projective case, we recall the following result
(which directly follows from the MacWillams identities, taking
into account that the dual code $C^\perp$ has minimum
distance $d^\perp \geq 3$) (see \cite{Del72}).

\begin{lem}\label{lm:4.2}
Let $C$ be a $2$-weight projective $[n,k,\{w_1, w_2\}]_q$ code
$C$ over $\F_q$,\,$q = p^m$,\, $p$ is prime. Denote by $\mu_1$
the number of codewords of $C$ of weight $w_1$ and by $\mu_2$ the
number of codewords of weight $w_2$. Then
\EQ \label{lm4-eq1}
w_1\,\mu_1 + w_2\,\mu_2 = n(q-1)q^{k-1},
\end{equation}
\begin{equation} \label{lm4-eq2}
w_1^2\,\mu_1  + w_2^2\,\mu_2 = n(q-1)(n(q-1)+1)q^{k-2}.
\EN
\end{lem}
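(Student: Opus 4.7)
The plan is to derive both identities by double counting, which is essentially the combinatorial content behind the MacWilliams identities for projective codes. Let $G$ be a generator matrix of $C$ with columns $g_1, \dots, g_n \in \F_q^k$. Since $C$ has two weights $w_1, w_2$ (with multiplicities $\mu_1, \mu_2$) and otherwise only the zero codeword, every weight moment $\sum_{c \in C} w(c)^\nu$ reduces to $w_1^\nu \mu_1 + w_2^\nu \mu_2$, so it suffices to compute the first two weight moments directly from the structure of $G$.

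For the first moment, I would write
$$\sum_{c \in C} w(c) \;=\; \sum_{i=1}^n \#\{c \in C : c_i \neq 0\}.$$
The map $c \mapsto c_i$ from $C$ to $\F_q$ is linear, and it is surjective iff $g_i \neq 0$. The projectivity assumption $d^\perp \geq 3$ implies in particular $d^\perp \geq 2$, so no column of $G$ is zero; thus each inner summand equals $(q-1)q^{k-1}$. Summing over $n$ positions yields equation \eqref{lm4-eq1}.

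For the second moment, I would expand
$$\sum_{c \in C} w(c)^2 \;=\; \sum_{i,j=1}^n \#\{c \in C : c_i \neq 0,\; c_j \neq 0\},$$
and split the double sum according to whether $i = j$ or $i \neq j$. The $n$ diagonal terms contribute the same total as before, namely $n(q-1)q^{k-1}$. For $i \neq j$, the projection $c \mapsto (c_i, c_j)$ is linear $C \to \F_q^2$, and it is surjective precisely when $g_i, g_j$ are linearly independent in $\F_q^k$. Here the full force of $d^\perp \geq 3$ enters: no two columns of $G$ are scalar multiples, and since both are nonzero they must be independent. Hence for each ordered pair $(i,j)$ with $i \neq j$ the count is $(q-1)^2 q^{k-2}$.

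Adding the two contributions gives
$$\sum_{c \in C} w(c)^2 \;=\; n(q-1)q^{k-1} + n(n-1)(q-1)^2 q^{k-2} \;=\; q^{k-2}\, n(q-1)\,\bigl[q + (n-1)(q-1)\bigr],$$
and the bracket simplifies to $n(q-1)+1$, giving \eqref{lm4-eq2}. The only nontrivial point in the plan is identifying where projectivity is used: $d^\perp \geq 2$ to force nonzero columns (needed for the first moment and the diagonal of the second), and $d^\perp \geq 3$ to force pairwise independence of columns (needed for the off-diagonal terms). Everything else is a routine counting/simplification step, so I do not expect any serious obstacle.
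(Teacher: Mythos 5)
Your proof is correct. The paper itself gives no written argument for this lemma: it simply notes that the two identities ``directly follow from the MacWilliams identities, taking into account that $d^\perp\ge 3$'' and points to Delsarte, i.e.\ it invokes the first two Pless power moments, which specialize to \eqref{lm4-eq1} and \eqref{lm4-eq2} once $B_1^\perp=B_2^\perp=0$. Your double-counting over columns of the generator matrix is the elementary combinatorial counterpart of exactly that: the statement that $c\mapsto c_i$ and $c\mapsto (c_i,c_j)$ are balanced maps is precisely the statement that $C$ is an orthogonal array of strength $2$, which is equivalent to $d^\perp\ge 3$. You correctly isolate where each level of projectivity is used (nonzero columns for the first moment and the diagonal, pairwise independence for the off-diagonal terms), and the final simplification $q+(n-1)(q-1)=n(q-1)+1$ checks out. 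What your route buys is self-containedness and transparency about the role of the hypothesis; what the paper's route buys is brevity and a pointer to the general machinery (all higher power moments, if one needed them). Either is acceptable; there is no gap.
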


To formulate the next necessary condition of existence a projective two-weight
code, the so-called {\em integrality condition}, we need {\em strongly regular graphs}.

\begin{deff}\label{deff:srg}
A connected simple (i.e., undirected, without loops and multiple edges) graph
$\Gamma$ on $v$ vertices is called strongly regular with parameters $(N,K,\lambda,\mu)$
if it is regular with valency $K$ and if the number of vertices joined to two
given vertices is $\lambda$ or $\mu$ according as the two given vertices are
adjacent or non-adjacent.
\end{deff}

The adjacency matrix $A=[a_{ij}]$ of $\Gamma$ is a $(N \times N)$-matrix
with $a_{ij}=1$ if the vertices with indices $i$ and $j$ are adjacent and
$a_{ij}=0$, otherwise. The eigenvalues of $A$ are $K$, $\rho_1$ and $\rho_2$, where
\EQ\label{eq:3.10}
\rho_1, \rho_2 = \frac{1}{2}\left((\lambda - \mu) \pm \sqrt{\Delta}\right),\;\;
\Delta = (\lambda-\mu)^2 + 4(K-\mu).
\EN
The multiplicity of $K$, $\rho_1$ and $\rho_2$ are $1$,\,$e_1$, and $e_2$, respectively,
where
\EQ\label{eq:3.11}
e_1, e_2 = \frac{1}{2}\left(N-1 \pm \frac{(N-1)(\lambda-\mu) + 2K}{\sqrt{\Delta}}\right).
\EN
The equation (\ref{eq:3.10}) is called the {\em integrality} or {\em rationality condition}
because $e_1$ and $e_2$ must be integers. According to Delsarte \cite{Del72} a two-weight
projective $[n,k,\{d,d+\delta\}]_q$-code induces a strongly regular graph
with parameters $(N,K,\lambda,\mu)$. For two-weight codes the parameters $N, K, \lambda$
and $\mu$ are related to the code parameters $n, k, d = w_1$ and $w_2 = d+\delta$ as follows \cite{CK86}:
\EQ\label{eq:3.12}
\left.
\begin{array}{ccl}
N &=& q^k,\\
K &=& n(q-1),\\
\lambda &=& K(K + 3) - q(w_1+w_2)(K+1) + q^2w_1w_2,\\
\mu &=& K(K+1) - Kq(w_1+w_2) + q^2w_1w_2.
\end{array}
\right\}
\EN

Using (\ref{eq:3.12}) we obtain the following formulas for $\Delta$:
\EQ\label{eq:3.13}
\Delta = q^2(w_1-w_2)^2 = (q \delta)^2,
\EN
i.e., for the case of two-weight codes, the value $\sqrt{\Delta}$ is
always integer. The multiplicities $e_1, e_2$ in (\ref{eq:3.11})
of the eigenvalues, i.e. the integrality condition, take the following
form:
\EQ\label{eq:3.14}
e_1, e_2 = \frac{1}{2}\left(q^k - 1 \pm \frac{q^k(2n(q-1) - q(w_1+w_2))}{q(w_1-w_2)}\right)\,.
\EN

Here  we derive an integrality condition similar to
(\ref{eq:3.11}) (which differs from the conditions (\ref{eq:3.13})
and (\ref{eq:3.14})) for projective two-weight codes using simple
combinatorial arguments not connected with eigenvalues of strongly
regular graphs.

The next statement gives a slightly different quadratic equation, which might
be useful, since it is valid not only for linear projective codes. Our
derivation uses only simple combinatorial arguments.  The analogous statement
was proved and used in \cite{BDZZ20} for the special case $\delta = 1$.
Recall that a $q$-ary matrix $M$ of size $N \times n$ is called an
orthogonal array denoted $OA(N,n,q,t)$ of strength $t$, index
$\lambda = N/q^t$ and $n$ constraints, if any its submatrix of size
$N \times t$ contains as a row any $q$-ary vector of length $t$ exactly $\lambda$
times (see \cite{BJL86}).

\begin{theo}\label{th:4.1}
Let $E_q$ be any alphabet and let $C$ be a $q$-ary two-weight
$(n,N,\{w_1, w_2\})_q$-code which is also an orthogonal array of strength
$t \geq 2$, i.e. the value $N/q^2$ is an integer. Let $u_i = n - w_i$,\,
$i = 1,2$. Then
\begin{itemize}
\item The length $n$ of the code $C$ satisfies the following quadratic
equation:
\EQ\label{eq:3.22}
n^2 - n\left(Q_1(u_1+u_2-1)+1\right) + Q_2 u_1u_2 = 0,\;\;
Q_1=q\,\frac{N-q}{N-q^2},\;\;Q_2=q^2\,\frac{N-1}{N-q^2}.
\EN
\item If $w_2 = n$ (i.e. if $u_2 = 0$), then the length $n$ of the code $C$ satisfies
\EQ\label{eq:3.22a}
n = \frac{Q_1(d+1) - 1}{Q_1-1},
\EN
where $d = w_1$ is the minimum (Hamming) distance of $C$.
\end{itemize}
\end{theo}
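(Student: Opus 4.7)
The plan is to fix an arbitrary codeword $\bx \in C$ and exploit the OA structure by double-counting the ``agreement'' variable $a(\bx,\by) := n - d(\bx,\by) \in \{u_1,u_2\}$. Set $\mu_i := |\{\by \in C\setminus\{\bx\} : a(\bx,\by) = u_i\}|$ for $i=1,2$, so that the two-weight property yields the zeroth-moment relation $\mu_1 + \mu_2 = N - 1$. The strategy is to derive two more linear equations in $\mu_1,\mu_2$ coming from the OA of strengths $1$ and $2$, and then use the resulting overdetermined $3 \times 2$ system to obtain a single polynomial relation in $n$; specializing to $u_2=0$ will give part (2).

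First, strength $1$ says each symbol appears $N/q$ times in every coordinate, so
\[
u_1\mu_1 + u_2\mu_2 \;=\; \sum_{\by \neq \bx} a(\bx,\by) \;=\; \sum_{i=1}^{n}\bigl(\tfrac{N}{q}-1\bigr) \;=\; \frac{n(N-q)}{q}.
\]
Strength $2$ says each ordered pair of symbols appears $N/q^2$ times in every pair of distinct coordinates, so
\[
u_1(u_1-1)\mu_1 + u_2(u_2-1)\mu_2 \;=\; \sum_{\by \neq \bx} a(\bx,\by)\bigl(a(\bx,\by)-1\bigr) \;=\; \frac{n(n-1)(N-q^2)}{q^2}.
\]

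To eliminate $\mu_1,\mu_2$ without solving a $2\times 2$ system I will use the identity
\[
u_1^2\mu_1 + u_2^2\mu_2 \;=\; (u_1+u_2)(u_1\mu_1+u_2\mu_2) - u_1u_2(\mu_1+\mu_2).
\]
Adding the first-moment equation to the falling-second-moment equation produces $u_1^2\mu_1 + u_2^2\mu_2$ on the left; substituting the already-known values of $\mu_1+\mu_2$ and $u_1\mu_1+u_2\mu_2$ on the right via the identity yields
\[
(u_1+u_2-1)\cdot\frac{n(N-q)}{q} - u_1u_2(N-1) \;=\; \frac{n(n-1)(N-q^2)}{q^2}.
\]
Multiplying through by $q^2/(N-q^2)$ and inserting the definitions of $Q_1$ and $Q_2$ gives $Q_1 n(u_1+u_2-1) - Q_2 u_1 u_2 = n(n-1)$, which rearranges to the quadratic (\ref{eq:3.22}).

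For part (2), substituting $u_2=0$ into (\ref{eq:3.22}) collapses it to $n^2 - n\bigl(1 + Q_1(u_1-1)\bigr) = 0$, so cancelling a factor of $n$ leaves the linear equation $n = 1 + Q_1(u_1-1)$; replacing $u_1 = n-d$ and isolating $n$ yields (\ref{eq:3.22a}). The only conceptual step is to switch from distances to agreements so that the OA moment counts align directly with symmetric functions of $u_1,u_2$; after that the derivation is bookkeeping, with the symmetric-function identity above providing the shortcut that bypasses any explicit computation of $\mu_1$ and $\mu_2$.
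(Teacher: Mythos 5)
Your proof is correct and follows essentially the same route as the paper's combinatorial argument: both double-count agreements (the paper counts zeroes, i.e.\ agreements with the zero codeword, in single columns and in pairs of columns of the codeword matrix) using the strength-$1$ and strength-$2$ properties of the orthogonal array, together with $\mu_1+\mu_2=N-1$, to arrive at the quadratic \eqref{eq:3.22}. The only cosmetic differences are that you fix an arbitrary reference codeword instead of the zero word, use ordered rather than unordered pairs, and eliminate $\mu_1,\mu_2$ via a symmetric-function identity where the paper solves for $\mu_1(u_1-u_2)$ and substitutes.
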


\pr
Note that if $C$ is linear, then its length $n$ should be less than
$(q^k-1)/(q-1)$, i.e. the code $C$ should be projective. Denote by $[C^*]$
the $(N-1 \times n)$-matrix formed by the all $N-1$ nonzero codewords of
$(n, N, \{w_1,w_2\})_q$-code $C$ (remark that $C$ is not necessary linear).
Let $\mu_i$ denote the number of codewords of weight $w_i$ in $C$ where $i=1,2$
and $w_2 > w_1$. It is convenient to work with number of zeroes in codewords
instead of the weight. So, denote by $u_i = n - w_i$ the number of zeroes in
a codeword of weight $w_i$.  Denote by $\Sigma_{(0)}$ the overall amount of zeroes
in the matrix $[C^*]$. For the overall number $\Sigma_{(0)}$ of
zero positions in $[C^*]$ we evidently have
\[
\Sigma_{(0)} = n (\frac{N}{q} - 1).
\]
On the other side this number  equals
\[
\Sigma_{(0)} = \mu_1 u_1 + \mu_2 u_2.
\]
We deduce, taking into account the evident equality $\mu_1 + \mu_2 = N - 1$, that
\EQ\label{eq:3.23}
 \mu_1 (u_1 - u_2) = n \frac{N-q}{q} - u_2 (N - 1).
\EN

Since all codewords of $C$ form  an orthogonal array of strength $t \geq 2$,
we can compute the number of all zeroes pairs $(0,0)$ (denoted by $\Sigma_{(0,0)}$)
in the matrix $[C^*]$ in two different ways. First, we have
\[
\Sigma_{(0,0)} = \binom{n}{2} (\frac{N}{q^2}-1)
\]
(recall that $[C^*]$ does not contain the zero codeword).
On the other side, since there are $\mu_1$ codewords with $u_1$ zero positions
and $\mu_2$ codewords with $u_2$ such positions we obtain
\[
\Sigma_{(0,0)} = \binom{u_1}{2} \mu_1 + \binom{u_2}{2} \mu_2 =
\left(\binom{u_1}{2}-\binom{u_2}{2}\right) \mu_1 + \binom{u_2}{2} (N-1).
\]
Taking into account (\ref{eq:3.23}) and using the equality
\[
\binom{u_1}{2}-\binom{u_2}{2} = \frac{1}{2}\,(u_1-u_2)(u_1+u_2-1),
\]
we arrive to the equality
\[
\binom{n}{2}\,\frac{N-q^2}{q^2} = \left(n\,\frac{N-q}{q} - u_2(N-1)\right) \frac{1}{2}\,(u_1+u_2-1)
+\binom{u_2}{2}(N-1),
\]
which reduces to the quadratic equation \eqref{eq:3.22}.

Hence for the existence of the two-weight $(n,N,\{w_1,w_2\})_q$-code
$C$ (which is also an orthogonal array of strength $t \geq 2$) the
equation (\ref{eq:3.22}) should have a solution in positive integer $n$ (where $u_i = n - w_i$,\,$i=1,2$).
Therefore
\[ n_1, n_2 = \frac{1}{2}\left((Q_1(u_1+u_2-1)+1) \pm \sqrt{(Q_1(u_1+u_2-1)+1)^2 - 4 Q_2 u_1u_2}\right) \]
should be positive integers (here $Q_1$ and $Q_2$ are defined in (\ref{eq:3.22})).
To have a solution, first, the following
inequality should be satisfied:
\[ 4 Q_2 u_1u_2 \leq (Q_1(u_1+u_2-1)+1)^2. \]
Second, the number
\[ (Q_1(u_1+u_2-1)+1)^2 - 4 Q_2 u_1u_2 \]
should be a perfect square. This finishes the proof.

In the case $w_2 = n$, i.e. $u_2 = 0$, we can put the value $u_2 = 0$ into (\ref{eq:3.22}). This gives the expression \eqref{eq:3.22a} for $n$.
\qed

\begin{rem}\label{rem:2.1}
Note that counting of distinct
pairs of nonzero positions (i.e. pairs $(a,b)$,\, $a,b \neq 0$) in all codewords
of the code $C$ gives exactly the expression for $n$, which can be deduced from
(\ref{eq:3.12}).
\end{rem}

The codes with parameters (\ref{eq:2.1}) induced by difference matrices, have $w_2 = n$
and, hence, their parameters satisfy the formula (\ref{eq:3.22a}). Another large class
of two-weight codes with $w_2=n$ are the codes, constructed by Bush \cite{B52}. 
These codes are linear and have parameters: $n = q + 2$,\;
$k = 3$,\;$w_1 = q$,\;$w_2 = n$, where $q = 2^s$.
Codes with $w_2 = n$ are associated with the hyperovals. In fact, all maximal arcs give rise to linear two-weight codes
with parameters:
\[
n = t(q + 1) - q,\;\; k = 3,\;\;w_1 = n - t,\;\;w_2 = n,\;\; q = 2^s.
\]
The examples which we give above correspond to the case $t = 2$
(see \cite{Den69}).

In the next statement we generalize Lemma \ref{lm:4.0} to the case of arbitrary two-weight
$[n,k,\{d,d+\delta\}]_q$-codes. Besides, we obtain slightly stronger result for
projective codes. Here we assume that $q = p^m$ where $m \geq 1$ and
$p$ is prime. Given $q=p^m$ and an arbitrary positive integer $a$ denote
by $\gamma_a \geq 0$ the maximal integer, such that $p^{\gamma_a}$ divides $a$,
i.e. $a = p^{\gamma_a}\,h$, where $h$ and $p$ are co-prime.
Let $\gamma_d$, $\gamma_\delta$ and $\gamma_c$ be defined similarly for $d$,
$\delta$ and $d_c$, respectively. Recall that $(a,b)$ denote the greatest
common divisor (gcd) of two integers $a$ and $b$.

\begin{theo}\label{th:4.2}
Let $q = p^m$, where $m \geq 1$ and $p$ prime. Let $C$ be a $q$-ary linear
nontrivial two-weight
$[n,k,\{d, d+\delta\}]_q$-code of dimension $k \geq 2$ and let $C_c$ be its
complementary two-weight $[n_c,k,\{d_c, d_c+\delta\}]_q$-code $C_c$, where
\EQ\label{eq:4.00}
d + d_c + \delta = s\,q^{k-1},\;\;s \geq 1\,.
\EN
\begin{itemize}
\item[(i)] If $s = 1$ and $k \geq 4$, i.e. $C$ and hence $C_c$ are projective
codes, then the following two equalities  are satisfied:
\EQ\label{eq:4.0}
(q,d) \,=\,(q,\delta)\;\;\mbox{and}\;\;(q,d_c) \,=\,(q,\delta)\,.
\EN

\item[(ii)] If $s = 1$ and $k = 3$, then both equalities in (\ref{eq:4.0}) are
satisfied, if at least one  of the following two conditions takes place:
\[
(d,q)^2 \leq q (n(n-1),q)\;\;\mbox{or}\;\;(d+\delta,q)^2 > q (n_c(n_c-1),q)\,.
\]
\item[(iii)] If $s = 1$ and $k \geq 2$, then at least one of the following two
equalities is satisfied:
\EQ\label{eq:4.00}
\gamma_d = \gamma_\delta,\;\;\mbox{or}\;\;\gamma_c = \gamma_\delta\,.
\EN
\item[(iv)] If $s \geq 1$ and $k \geq 3$, then at least one of the two equalities
in (\ref{eq:4.00}) (respectively, in (\ref{eq:4.0})) is valid.

\end{itemize}
\end{theo}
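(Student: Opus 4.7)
The proof splits into the projective case $s=1$ (parts (i)--(iii)) and the general case (part (iv)), with the complementary identity $d+d_c+\delta = sq^{k-1}$ from Lemma~\ref{lm:4.1} as the common thread.

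For parts (i)--(iii), applying Delsarte's Lemma~\ref{lm:4.0} to the projective codes $C$ and $C_c$ yields $d = hp^u$, $d+\delta = (h+1)p^u$, $d_c = h'p^u$, $d_c+\delta = (h'+1)p^u$, so $\delta = p^u$, $\gamma_\delta = u$, and $\gamma_d \geq u$ with equality iff $\gcd(h,p)=1$ (similarly for $\gamma_c$). Substituting into $d+d_c+\delta = q^{k-1}$ gives $(h+h'+1)p^u = p^{m(k-1)}$, and reducing modulo $p$ (valid for $k\geq 2$) forces $h+h' \equiv -1 \pmod p$. Hence at most one of $h, h'$ is divisible by $p$, which is exactly part (iii). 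To pass from ``at least one'' to ``both'' in parts (i)--(ii), I would compute $\mu_1, \mu_2$ modulo $p^{u+1}$ using the first moment of Lemma~\ref{lm:4.2} (obtaining $\mu_2 \equiv h$ and $\mu_1 \equiv h' \pmod p$), then feed these into the second moment
\[
\mu_1 w_1^2 + \mu_2 w_2^2 = n(q-1)(n(q-1)+1)\,q^{k-2}.
\]
After dividing by $p^{2u}$ and reducing modulo $p$, this yields a polynomial congruence in $h, h'$ which, combined with $h+h'\equiv -1 \pmod p$, forces a second divisibility condition; for $k\geq 4$ the factor $q^{k-2} = p^{m(k-2)}$ provides enough $p$-adic slack (supplemented by higher moments, or by the integrality of the multiplicities $e_1, e_2$ from~(\ref{eq:3.14}) when $u$ is large) to rule out the bad case $p\mid h$ and $p\mid h'$ simultaneously, giving (i). In dimension $k=3$ this slack shrinks, and the two hypotheses of (ii) comparing $(d,q)^2$ with $q(n(n-1),q)$ (and its $C_c$-counterpart) are precisely what control the $p$-part of $n(q-1)(n(q-1)+1)$ and re-close the argument.

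For part (iv), where $s\geq 2$ is allowed, Lemma~\ref{lm:4.0} is unavailable. Instead, the first-moment identity $\mu_2 \delta = n(q-1)q^{k-1} - d(q^k-1)$ (which holds for any linear two-weight code without trivial positions, since $A_1^\perp = 0$) together with $\gcd(q^k-1, p)=1$ and the integrality of $\mu_2$ forces $\gamma_d \geq \gamma_\delta$ in the regime $\gamma_\delta \leq m(k-1)$; by symmetry $\gamma_c \geq \gamma_\delta$. Using $d + d_c = sq^{k-1} - \delta$ and $p$-adic arithmetic, assuming $\gamma_d > \gamma_\delta$ and $\gamma_c > \gamma_\delta$ simultaneously would force the rigid equality $\gamma_s + m(k-1) = \gamma_\delta$ together with a specific residue condition, which the size bound $\delta \leq n \leq s(q^k-1)/(q-1)$ combined with $k\geq 3$ rules out via a short case analysis.

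\textbf{Main obstacle.} The softer parts (iii) and (iv) reduce, once the correct identities are in hand, to $p$-adic bookkeeping on the complementary relation. The substantive work lies in upgrading ``at least one'' to ``both'' in (i)--(ii): extracting a genuinely independent second congruence from the second-moment MacWilliams equation, tracking it carefully modulo $p^{u+1}$, and, in the borderline dimension $k=3$, recovering the lost slack via the finer $\gcd$ hypotheses built into (ii). Properly delineating the range of $\gamma_\delta$ in which the first-moment argument suffices for (iv), and filling the residual gap via the size bound on $\delta$, is the other piece of care needed.
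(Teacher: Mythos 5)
Your part (iii) is fine and in fact cleaner than the paper's case analysis: applying Lemma \ref{lm:4.0} to both $C$ and $C_c$ and reducing $(h+h'+1)p^u=p^{m(k-1)}$ modulo $p$ does show that $h$ and $h'$ cannot both be divisible by $p$. For (iv) you have the right identity (the first power moment $w_1\mu_1+w_2\mu_2=n(q-1)q^{k-1}$, valid without projectivity), but your way of excluding ``both valuations strict'' through $d+d_c=sq^{k-1}-\delta$ does not close: for $s\ge 2$ the escape case $\gamma_\delta=\gamma_s+m(k-1)$ survives your size bound (one only gets $s/p^{\gamma_s}\ge 2$, not a contradiction). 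The paper avoids this entirely: $\gamma_d>\gamma_\delta$ forces $p\mid\mu_2$ and $\gamma_c>\gamma_\delta$ forces $p\mid\mu_1$, while $\mu_1+\mu_2=q^k-1\equiv-1\pmod{p}$ makes these incompatible.

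The genuine gap is in (i)--(ii). First, you have misidentified what must be excluded: ruling out ``$p\mid h$ and $p\mid h'$ simultaneously'' is exactly statement (iii)/(iv), which you already have; statement (i) requires ruling out even one of them, i.e.\ the case $p\mid h$, $p\nmid h'$ (and its mirror). Second, the computation you sketch produces no contradiction there: dividing the second moment by $p^{2u}$ and reducing modulo $p$, with $\mu_2\equiv h$, yields $h(h+1)\equiv 0\pmod{p}$, which is perfectly consistent with $p\mid h$. Third, the proposed rescue by ``higher moments'' is unavailable: projectivity ($d^\perp\ge 3$) yields only the two power moments of Lemma \ref{lm:4.2}. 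The paper's actual argument works at a higher $p$-adic level in one step: assuming $(q,d)=p^t>(q,\delta)=(q,d_c)=p^r$ and $(\mu_2,q)=p^u$ with $t=r+u$, it rewrites the second moment as $(\mu_1+\mu_2)d^2+2\mu_2 d\delta+\mu_2\delta^2=n(q-1)(n(q-1)+1)q^{k-2}$ and checks that modulo $p^{t+r+1}$ every term vanishes except $\mu_2\delta^2$, whose valuation is exactly $t+r$; the right-hand side is killed by $q^{k-2}$ when $k\ge 4$, and for $k=3$ the two gcd hypotheses of (ii) are precisely what guarantee the right-hand side still vanishes for $C$ or for $C_c$. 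That valuation bookkeeping modulo $p^{t+r+1}$, rather than a mod-$p$ congruence, is the missing idea.
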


\pr
 We start from the statement (iv).
Let $C$ be a $q$-ary linear two-weight $[n,k,\{d, d+\delta\}]_q$-code.
Recall that $\mu_1$ is the number of codewords of $C$ of weight $d$ and
$\mu_2$ is the number of codewords of weight
$d+\delta$. Then \eqref{lm4-eq1} and the evident equality
\EQ\label{eq:4.2}
\mu_1 + \mu_2 = q^k - 1,
\EN
imply that
\EQ\label{eq:4.3}
(q^k-1) d + \mu_2\delta = n(q-1)q^{k-1}.
\EN
We deduce (recall that $k \geq 3$) from the equality (\ref{eq:4.3}) that
$\gamma_\delta \leq \gamma_d$ and $\gamma_d \leq \gamma_\mu + \gamma_\delta$
(where $\mu_2 = h q^{\gamma_\mu}$ and $q$ and $h$ are co-prime).
If $\gamma_\mu = 0$, then we obtain $\gamma_d = \gamma_\delta$, otherwise we
have $\gamma_d > \gamma_\delta$.

Consider the case $\gamma_\mu \geq 1$ (or equivalently, $(\mu_2,q) > 1$).
By Lemma \ref{lm:4.1},
the existence of $C$ implies the existence of the complementary
two-weight $[n_c,k,\{d_c, d_c+\delta\}]_q$-code $C_c$, containing $\mu_1$
codewords of weight $d_c +\delta$ and $\mu_2$ codewords of weight $d_c$.
The equation (\ref{eq:4.3}) for the code $C_c$ looks as
\EQ\label{eq:4.5}
(q^k-1) d_c + \mu_1\delta = n_c(q-1)q^{k-1}
\EN
(indeed, in $C_c$ the codewords of weight $d_c+\delta$ occur $\mu_1$ times).
Taking into account that $(\mu_2,q) > 1$ and $\mu_1 + \mu_2 \equiv -\,1 \pmod{q}$
from \eqref{eq:4.2},
we deduce that $(\mu_1,q) = 1$. Hence from (\ref{eq:4.5}) we obtain that
\EQ\label{eq:4.02}
\gamma_c = \gamma_\delta.
\EN
From the equalities in (\ref{eq:4.00}) between $\gamma$'s of parameters $d$, $d_c$
and $\delta$, we obtain the corresponding equalities in (\ref{eq:4.0})
between the corresponding greatest common divisors. This completes the proof of (iv).

For (i), i.e. for the case $s=1$ and $k \geq 4$, we already know
that one of the equalities in (\ref{eq:4.0}) is valid. Let us assume for a
contradiction that $(q,d_c)=(q,\delta)$ but $(q,d)>(q,\delta)$.
The identity \eqref{lm4-eq2} can be written as follows:
\EQ\label{eq:4.04}
(\mu_1 + \mu_2) d^2 + 2\mu_2 d\delta + \mu_2 \delta^2 = n(q-1)(n(q-1)+1)q^{k-2}\,.
\EN

Taking into account the equalities (\ref{eq:4.00}), (\ref{eq:4.3}),
and (\ref{eq:4.02}), set
\EQ\label{eq:4.06}
(d,q) = p^t,\;\;(d_c,q) = (\delta,q) = p^r,\;\;(\mu_2,q) = p^u,
\EN
where $u, r \geq 1$ and $t = r + u$. Then (\ref{eq:4.04}) gives a contradiction
modulo $p^{t+r+1}$ for any $k \geq 4$ and any $t \leq m$ (recall that $q = p^m$
and $p$ is prime). Indeed, only the third (from the left) monom in (\ref{eq:4.04})
is nonzero  modulo $p^{t+r+1}$.

Consider the case $k = 3$, i.e. the statement (ii).
Clearly we obtain the same contradiction in the equality (\ref{eq:4.04})
(i.e., the same third monom at the left would be only nonzero modulo $p^{t+r+1}$)
for any $t \leq m/2$. Furthermore, because of the following equality for greatest common divisors,
\[
(n(q-1)(n(q-1)+1),q) = (n(n-1),q),
\]
we arrive to the same contradiction in (\ref{eq:4.04}) for the case when
\[
(d,q)^2 \leq q(n(n-1),q).
\]

Clearly the same idea can be used for the complementary (projective) code $C_c$.
The analog of \eqref{eq:4.04} for $C_c$ is
\EQ\label{eq:4.05}
(\mu_1 + \mu_2) d_c^2 + 2\mu_1 d_c\delta + \mu_1 \delta^2 = n_c(q-1)(n_c(q-1)+1)q^{k-2}\,.
\EN
We obtain a contradiction, if the following inequality would be valid:
\[
(d_c,q)^2 > q (n_c(n_c-1),q).
\]
Then the left hand side is divisible by $(d_c,q)^2$ and the right hand side is not (note that $k=3)$. This implies (ii).

For the case (iii), according to Lemma \ref{lm:4.0} there exist
nonnegative integers $g$ and $\gamma$ such that

\EQ\label{eq:4.08}
d = g\,p^{\gamma},\;\;\mbox{and}\;\;d+\delta = (g+1)\,p^{\gamma}.
\EN
First, assume that $g = \ell p^\alpha$, where $\ell$ and $p$ are
mutually prime. Using (\ref{eq:4.00}), we obtain
\[
d_c = sq^{k-1} - \ell p^\alpha p^\gamma - p^\gamma =
p^\gamma (s p^{m(k-1)-\ell \gamma} - p^\alpha - 1),
\]
implying for this case that
\[
\gamma_d \neq \gamma_\delta,\;\;\mbox{ but}\;\; \gamma_c = \gamma_\delta.
\]

For the case $g+1 = \ell p^\alpha$ similar arguments  imply that
\[
\gamma_d = \gamma_\delta,\;\;\mbox{but}\;\; \gamma_c \neq \gamma_\delta.
\]

Now assume that $(g,p) = 1$ and $(g+1,p) = 1$. We obtain for this case
\[
d_c = p^\gamma \left(sp^{m(k-1)-\gamma} - (g + 1)\right)
\]
implying that $\gamma_c = \gamma$. Since $\gamma_d = \gamma_\delta = \gamma$,
we obtain both equalities:
\[
\gamma_d = \gamma_\delta = \gamma_c.
\]
\qed

Note that the condition $k \geq 3$ in the cases (i) - (iii) of
Theorem \ref{th:4.2} can not be removed. It is easy to construct
two-weight $[n,2,\{d, d+\delta\}]_q$-code, where $\delta$ is an arbitrary
positive integer. Indeed, extend the well known equidistant
$[q+1,2,q]_q$-code $A$ with generating vectors $\bx_1$ and
$\bx_2$ as follows: add the zero vector $\bf{0}$ of length $\delta$
to $\bx_1$ and any vector $\bz$ of weight $\delta$ and length $\delta$
to $\bx_2$. The resulting two new codewords $\by_1 = (\bx_1\,|\,\bf{0})$
and $\by_2 = (\bx_2\,|\,\bz)$ generate a two-weight
$[q+1,2,\{q, q + \delta\}]_q$-code $C$, where  $\delta$ is an
arbitrary natural number (implying, in particularly,
that the equality $(d,q) = (\delta,q)$ is almost never valid). In this case check the
distance $d_c$ of the complementary code. The code $C$ has $s = \delta+1$.
Indeed, we add $\delta$ linearly dependent over $\F_q$ columns to the
matrix, which has already one such kind of column (up to multiplying by a
scalar). Hence, we have for
the distance $d_c$ of the complementary code $C_c$:
\[
d_c = (\delta+1)q - q - \delta = \delta (q-1).
\]
So we obtain for such codes that $(d,q) = q$ and $(\delta,q) = (d_c,q)$
is any natural number. Hence, the first equality in (\ref{eq:4.0})
is valid only for $\delta$ multiple to $q$, and the second equality in
(\ref{eq:4.0}) is valid always.

Let $q=p^m\geq 4$ be a prime power and $2 \leq r \leq q+1$.
From the MDS $[r,2,\{r-1,r\}]_q$-code $A$ and the
equidistant (simplex) code $B$ with parameters
\EQ\label{eq:4.10}
n_b = \frac{p^m-1}{p-1},\;\;k_b = m,\;\;d_b = p^{m-1},\;\; q_b = p,
\EN
we obtain by concatenation the family of
two-weight linear $p$-ary codes with the following parameters
(the family $SU2$ in \cite{CK86}):
\EQ\label{eq:4.12}
n = r\,\frac{p^m-1}{p-1},\;\; k = 2m,\;\;d =
(r-1)p^{m-1},\;\; \delta = p^{m-1},\;\;r=2,\ldots, q+1.
\EN
These $[n,2m, \{d, d+\delta\}]_p$-codes with
parameters (\ref{eq:4.12}) have $d = (r-1)p^{m-1}$ and $\delta =
p^{m-1}$ where $r \leq p^m+1$. Hence for $r = p^\ell+1
\leq p^m+1$ we obtain $\gamma_d = m+\ell-1$ and $\gamma_\delta =
m-1$ and the first equality in (\ref{eq:4.00}) is not valid.
Let us find the parameters of complementary code $C_c$. Clearly
\[
n_c = (p^m + 1 - r)\,\frac{p^m-1}{p-1},\;\;d_c = (p^m - r)\,p^{m-1},
\]
which implies $d_c = (p^m-p^\ell-1)\,p^{m-1}$ and hence $\gamma_c = m-1$.
Thus, $\gamma_c = \gamma_\delta$ and
the second equality in (\ref{eq:4.00}) is valid.

In some cases the conditions (\ref{eq:4.0}) and (\ref{eq:4.00}) are also
sufficient.

\begin{theo}\label{th:4.3}
Let $q = p^u$ be a prime power, $\delta = (q,\delta) h$ where $q$ and
$h$ be mutually prime, and let $s \geq 1$ be a natural number.
\begin{itemize}
\item[(i)] If $d + \delta = s\,q^r$ then for any $\delta = (q,\delta) h$, such that
$(q,d) = (q,\delta)$ and $h \leq s$, there exist a $q$-ary
linear two-weight $[n, r+1, \{d, d+\delta\}]_q$ code $C$ of length
\EQ\label{eq:4.20}
n = s\,\frac{q^{r+1}-1}{q-1} - h\,\frac{q^{\ell+1}-1}{q-1}\,.
\EN
If $h \leq q-1$, then the code $C$ is optimal as a $[n,k,d]_q$-code.
\item[(ii)] If $d = s\,q^r$ then for any $\delta = (q,\delta) h$, such that
$(q,d) = (q,\delta)$, there exist a $q$-ary linear
two-weight $[n, r+1, \{d, d+\delta\}]_q$ code $C$ of length
\EQ\label{eq:4.21}
n = s\,\frac{q^{r+1}-1}{q-1} + h\,\frac{q^{\ell+1}-1}{q-1}\,.
\EN
If $h = s$ then there exist an optimal
two-weight $(n, N, \{d, d+\delta\})_q$ code $C$ of length
$n = d +\delta$ and cardinality $N = q\,n$.
\item[(iii)] Let $p$ be any prime and $t$ be any natural number.
If $\gamma_d = t$ and $\delta = p^t$, i.e. $\gamma_d = \gamma_\delta$,
then for any $d = h\,p^{\gamma_d}$ where $h$ is any natural number
mutually prime to $p$, such that $h \leq p^{t+1} + 1$,
there exists a $p$-ary optimal two-weight $[n,k,\{d,d+\delta\}]_p$-code.
\end{itemize}
\end{theo}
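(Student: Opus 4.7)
All three statements are constructive, so for each I would exhibit an explicit linear code built from simplex and MDS codes, verify the two-weight property via a projective-geometry weight count, and check optimality via the Griesmer or Gray--Rankin bound.

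For part (i), writing $(q,\delta)=q^\ell$ so that $\delta=hq^\ell$, I would begin with the $q$-ary simplex code whose generator matrix has one column for each point of $PG(r,q)$, replicate each column $s$ times to get an equidistant $[s(q^{r+1}-1)/(q-1),r+1,sq^r]_q$ code, and then delete $h$ of the $s$ copies of each column indexed by a point of some fixed $\ell$-dimensional projective subspace $V\subset PG(r,q)$; the hypothesis $h\le s$ makes this legal. For a nonzero message $u\in\F_q^{r+1}$, the number of deleted positions on which $u$ is nonzero equals $h$ times the number of points of $V$ not annihilated by $u$, and the standard dichotomy (a hyperplane meets $V$ either in $V$ itself or in a hyperplane of $V$) says this count is either $0$ or $q^\ell$. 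Hence the new code has exactly the two weights $sq^r=d+\delta$ and $sq^r-hq^\ell=d$, and length (\ref{eq:4.20}). For optimality when $h\le q-1$, the Griesmer bound $n\ge\sum_{i=0}^{r}\lceil d/q^i\rceil$ telescopes cleanly: for $i\le\ell$ the ceiling equals $sq^{r-i}-hq^{\ell-i}$ exactly, and for $i>\ell$ the subtracted fractional part $hq^{\ell-i}$ lies strictly between $0$ and $1$ (this is where $h<q$ enters), so the ceiling is $sq^{r-i}$; summing reproduces (\ref{eq:4.20}).

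For part (ii), I would mirror (i) by augmenting rather than puncturing: append $h$ extra copies of the columns indexed by points of some $\ell$-dimensional subspace $V$ of $PG(r,q)$ to the $s$-fold simplex. The identical intersection analysis gives weights $sq^r=d$ for $u\in V^\perp$ and $sq^r+hq^\ell=d+\delta$ otherwise, so $\delta=hq^\ell=(q,\delta)h$, and length (\ref{eq:4.21}). The special case $h=s$ producing an optimal code with $w_2=n$ and $N=qn$ does not come out of the generic augmented-simplex construction but is supplied directly by the difference-matrix construction (\ref{eq:2.1}) via Lemma \ref{lm:2.2}: substituting $n=q\mu$, $N=q^2\mu$, $w_1=(q-1)\mu$, $w_2=n$ into the Gray--Rankin bound (\ref{eq:2.3}) yields equality.

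For part (iii), I would use the concatenation (\ref{eq:2.0}) of an MDS $[h+1,2,h]_{p^{t+1}}$ outer code (which exists because $h\le p^{t+1}+1$ and is itself two-weight with nonzero weights $h$ and $h+1$) with the $p$-ary simplex $[(p^{t+1}-1)/(p-1),\,t+1,\,p^t]_p$ as inner code. Since the inner code is equidistant of weight $p^t$, every outer symbol of weight $w$ contributes $wp^t$ to the overall weight, producing a $p$-ary two-weight $[(h+1)(p^{t+1}-1)/(p-1),\,2(t+1),\,\{hp^t,(h+1)p^t\}]_p$ code with $\delta=p^t$ as required. Optimality is once more a Griesmer computation, using that both outer and inner codes individually meet Griesmer so the concatenation does too. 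The main obstacle across all three parts is really just the optimality verification: the two-weight property is routine once the projective-geometry set-up is fixed, but the ceiling calculations in the Griesmer bound and the separate identification of the antipodal code in (ii) require some care.
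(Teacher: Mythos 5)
Your constructions are exactly the ones the paper uses: part (i) is the family $SU1$ (delete $h$ of the $s$ copies of the columns of an $\ell$-dimensional subspace from the $s$-fold simplex over $\PG(r,q)$, then invoke Griesmer for $h\leq q-1$), part (ii) is the augmented version together with the difference-matrix codes of (\ref{eq:2.1}) meeting the Gray--Rankin bound (\ref{eq:2.3}) when $h=s$, and part (iii) is the family $SU2$ obtained by concatenating a two-weight MDS code of dimension $2$ with the $p$-ary simplex. Your weight computations via the hyperplane/subspace intersection dichotomy and your explicit telescoping of the Griesmer sum in (i) are more detailed than the paper's, which simply cites the bound; you also quietly fix an off-by-one in the paper's choice $h=r$ versus $r=h+1$ in (iii).

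The one point to flag is your optimality argument in (iii): the claim that a concatenation of Griesmer-meeting codes meets the Griesmer bound is not a valid principle, and for the code at hand it fails --- for the $[(h+1)\frac{p^{t+1}-1}{p-1},\,2(t+1),\,hp^t]_p$ code the Griesmer sum is $h\frac{p^{t+1}-1}{p-1}+\sum_{j=1}^{t+1}\lceil h/p^j\rceil$, which for small $h$ is roughly $t+1$ more than $h\frac{p^{t+1}-1}{p-1}$, whereas the actual length exceeds that quantity by $\frac{p^{t+1}-1}{p-1}$. To be fair, the paper's own proof of (iii) asserts optimality without any justification either, so you are not worse off than the source, but you should not present the Griesmer computation as if it closes this case.
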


\pr (i) Let $d + \delta = s\,q^r$, where $q = p^u$ and $u \geq 1$,
i.e. $q \geq p$. Under the conditions of the theorem, we can set
$\delta = h \,q^\ell$, where $1 \leq \ell \leq r-1$, and where
$(h,q) = 1$ and $h \leq s$.

Consider the linear equidistant $[n,k,d]_q$-code $A$
(dual to the $q$-ary Hamming code of length $n$) with
parameters:
\[
n = \frac{q^m-1}{q-1},\;\;k = m,\;\;d = q^{m-1}.
\]
It is well known that $A$ contains as a subcode a
linear $q$-ary equidistant code $B$ with parameters
\[
n_b = \frac{q^r-1}{q-1},\;\;k_b = r,\;\;d_b = q^{r-1},\;\;r=2,3, \ldots, m-1.
\]
Taking $s$ copies of $A$ and $h$ copies of $B$, we obtain
(by deleting $B$ from $A$) the family of linear $q$-ary
two-weight codes (the family $SU1$ in \cite{CK86}) with the following
parameters:
\EQ\label{eq:4.30}
n = \frac{s(q^m-1) -
h(q^r-1)}{q-1},\;\;k = m,\;\; d = s\,q^{m-1}-h\,q^{r-1},\;\;
\delta = h\,q^{r-1}\,,
\EN
where $r=2,\ldots, m-1$ and  $1 \leq h \leq s$. So, we obtain a
linear two-weight code $C$ of length (\ref{eq:4.20}),
which satisfies the condition of the theorem. If $h \leq q-1$,
then $C$ (as a $[n,k,d]_q$-code) is optimal according to the
Griesmer bound. This gives the first statement.

(ii) Consider the case $d = s\,q^r$. Assume that $\delta = h \,q^\ell$,
where $1 \leq \ell \leq r-1$, and where $h$ is any positive integer coprime to $q$. 
Consider the codes $A$ and $B$ from 
the case (i) above. Taking the union of $s$ copies of $A$ and
$h$ copies of $B$, we obtain a linear two-weight code $C$ of length
(\ref{eq:4.21})
which satisfies the condition of the theorem.

For the case when $d = s\,q^r$ and $\delta = s\,q^\ell$, where
$1 \leq \ell \leq r-1$, one can choose the codes with parameters (\ref{eq:2.1})
from difference matrices, which have the minimum possible length $n = d+\delta$
and cardinality $N = q\,n$. In this case the resulting code is nonlinear
until $n = q^u$. These codes are optimal since they attain the $q$-ary
Gray-Rankin bound (\ref{eq:2.3}).

(iii) Let $q=p^m\geq 4$ be a
prime power and $2 \leq r \leq q+1$.
From the outer MDS $[r,2,\{r-1,r\}]_q$-code $A$ and the inner
equidistant (simplex) code $B$ with parameters
\EQ\label{eq:4.31}
n_b = \frac{p^m-1}{p-1},\;\;k_b = m,\;\;d_b = p^{m-1},\;\; q_b = p,
\EN
we obtain  the following family of concatenated
two-weight linear $p$-ary codes with the following parameters
(the family $SU2$ in \cite{CK86}):
\EQ\label{eq:4.32}
n = r\,\frac{p^m-1}{p-1},\;\; k = 2m,\;\;d =
(r-1)p^{m-1},\;\; \delta = p^{m-1},\;\;r=2,\ldots, q+1.
\EN
Set $m = t+1$ and choose any $h = r \leq p^m + 1$,
which is mutually prime to $p$. So, for any such $h$ these codes have
$d = h\,p^{m-1}$ and $\delta = p^{m-1}$, such that $\gamma_d = \gamma_\delta$.
This gives (iii).
\qed

Theorem \ref{th:4.3} has a nice geometric explanation: it consists in removing $h$
copies of the same subspace $\PG(\ell, q)$ from $s$ copies of
projective space $\PG(r, q)$ (respectively,
adding $h$ copies of the same subspace $\PG(\ell, q)$ to $s$ copies of $\PG(r, q)$).
In the case when $r + 1 = 2(\ell + 1)$ we can get two-weight linear codes
by removing a partial spread consisting of $h$ subspaces of dimension $\ell$
from $\PG(r, q)$. Since in this case $\PG(r, q)$ can be partitioned into
$\ell$-subspaces, $h$ is unrestricted (see \cite{Tha95}).

\section{Upper bounds}

We are interested in upper bounds for the quantity
\[ A_q(n;\{d,d+\delta\}) = \max \{ |C| : C \mbox{ is an $(n,|C|, \{d,d+\delta\})$ code}\}, \]
the maximal possible cardinality of a code in $E_q^n$ with two distances $d$ and $d+\delta$.

\subsection{General linear programming bound}

We adapt the Delsarte linear programming bound for $A_q(n;\{d,d+\delta\})$.

For fixed $n$ and $q$, the (normalized) Krawtchouk polynomials are defined by
\[
Q_i^{(n,q)}(t) =\frac{1}{r_i} K_i^{(n,q)}(z), \ z=\frac{n(1-t)}{2},  \ r_i=(q-1)^i {n \choose i},
\]
where
\[ K_i^{(n,q)}(z)=\sum_{j=0}^{i}   (-1)^j(q-1)^{i-j} {z \choose j} {n-z \choose i-j} \]
are the (usual) Krawtchouk polynomials.

If $f(t) \in \mathbb{R}[t]$ is of degree $m \geq 0$, then it can be uniquely expanded as
\begin{equation} \label{kraw-exp}
f(t) =    \sum_{i=0}^n f_i Q_i^{(n,q)}(t),
\end{equation}
where, if $\deg(f) \geq n+1$, the polynomial $f(t)$ is considered modulo $\prod_{i=0}^n (t-1+2i/n)$.

\begin{theo}
\label{thm lp}
Let $n \geq q \geq 2$ and $f(t)$ be a real polynomial such that:

{\rm (A1)} $f(t) \leq 0$ for $t \in  \{1-2d/n,1-2(d+\delta)/n\}$;

{\rm (A2)} the coefficients in the Krawtchouk expansion \eqref{kraw-exp} satisfy $f_0>0$ and $f_i \geq 0$ for every $i \geq 1$.

Then
\begin{equation} \label{lp-gen} A_q(n;\{d,d+\delta\}) \leq \frac{f(1)}{f_0}. \end{equation}
If an $(n,N,\{d,d+\delta\})_q$ code $C$ attains \eqref{lp-gen} for some polynomial $f(t)$, then
$f(1-2(d+i)/n)=0$, $i=0,\delta$, whenever there are points of $C$ at distance $d+i$, $i=0,\delta$, and
$f_iM_i(C)=0$, where
\begin{equation} \label{moments} M_i(C)= \sum_{x,y \in C} Q_i^{(n,q)}(1-2d(x,y)/n)=0
\end{equation}
is the $i$-th moment of $C$.
 \end{theo}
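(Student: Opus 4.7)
The plan is to run the standard Delsarte double-sum argument on the two-point distance set $\{d, d+\delta\}$. Let $C$ be an $(n, N, \{d, d+\delta\})_q$ code and introduce the auxiliary quantity
\[
S \;=\; \sum_{x, y \in C} f\!\left(1 - \frac{2\,d(x,y)}{n}\right),
\]
which I will bound from above and from below. The upper bound comes from (A1): I split $S$ into its diagonal part, which contributes $N f(1)$ (since $d(x,x)=0$ corresponds to $t=1$), and its off-diagonal part, each of whose summands evaluates $f$ at $1 - 2d/n$ or $1 - 2(d+\delta)/n$ and is therefore $\leq 0$ by (A1). Hence $S \leq N f(1)$.

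For the lower bound I would use the Krawtchouk expansion \eqref{kraw-exp} and interchange the order of summation:
\[
S \;=\; \sum_{i=0}^n f_i \sum_{x,y \in C} Q_i^{(n,q)}\!\left(1 - \frac{2\,d(x,y)}{n}\right) \;=\; \sum_{i=0}^n f_i M_i(C),
\]
with $M_i(C)$ the $i$-th moment from \eqref{moments}. The key fact --- and really the only non-elementary ingredient --- is Delsarte's positivity $M_i(C) \geq 0$ for every $i \geq 0$, which holds in the Hamming scheme because the Krawtchouk polynomials form the eigenmatrix of a $P$-polynomial association scheme. Together with $M_0(C) = N^2$ and condition (A2), this yields $S \geq f_0 N^2$. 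Combining the two bounds gives $f_0 N^2 \leq N f(1)$, which is exactly \eqref{lp-gen}.

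For the equality statement, if $N = f(1)/f_0$ then both inequalities in the chain above must be tight. Equality in the upper bound forces every off-diagonal summand of $S$ to vanish, i.e.\ $f(1 - 2(d+i)/n) = 0$ for every $i \in \{0,\delta\}$ that is actually realized as a distance in $C$; equality in the lower bound forces $f_i M_i(C) = 0$ for every $i \geq 1$. The main obstacle here is purely conceptual rather than technical: one must invoke Delsarte's positivity $M_i(C) \geq 0$, which is standard in the theory of association schemes and can simply be cited, after which the rest of the argument is routine bookkeeping with two chained inequalities.
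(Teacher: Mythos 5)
Your proof is correct and follows essentially the same route as the paper, which simply cites the identity $f(1)|C|+\sum_{x\neq y} f(1-2d(x,y)/n)=f_0|C|^2+\sum_{i\geq 1} f_i M_i(C)$ (Levenshtein's Equation (26)) together with (A1), (A2), and Delsarte's positivity $M_i(C)\geq 0$; your double-sum $S$ evaluated two ways is exactly that identity. You merely spell out the folklore steps that the paper leaves implicit, including the equality analysis.
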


Proofs of such bounds are usually considered as folklore (see, for example, \cite{Del73,Lev95}). 
Applying (A1) and (A2) in the identity (see Equation (26) in \cite{Lev95})
\begin{equation} 
\label{main-id}
f(1)|C|+\sum_{x,y \in C, x \neq y} f(1-\frac{2d(x,y)}{n})=f_0|C|^2+\sum_{i=1}^n f_i M_i(C), 
\end{equation}
immediately implies Theorem \ref{thm lp}. We illustrate the use of \eqref{main-id} by another proof of Theorem  \ref{th:4.1}.

{\it Second proof of Theorem  \ref{th:4.1}}. We apply \eqref{main-id} with the polynomial
\[ f(t)=\left(t-1+\frac{2w_1}{n}\right)\left(t-1+\frac{2w_2}{n}\right) \]
and a code $C$ in the context of Theorem \ref{th:4.1}. Then $M_1(C)=M_2(C)=0$ since $C$ has strength 2, $f_i=0$ for $i \geq 3$,
and 
\[ \sum_{x,y \in C, x \neq y} f(1-2d(x,y)/n)=0 \] since $f$ vanishes at the "inner products"
$1-2d(x,y)/n$ of $C$. Therefore $f(1)=f_0N$. 

Since $f(1)=4w_1w_2/n^2$ and
\[ f_0={\frac {4(n^{2}(q-1)^2-n(q-1)(w_1+w_2q-1)+q^2w_1w_2)}{{n}^{2}{q}^{2}}},\]
after simplifications we obtain \eqref{eq:3.22}. 
\qed

\subsection{Specified linear programming bounds}

The degree one polynomial $f(t)=t-1+2d/n$ gives the Plotkin bound which is attained for many large $d$.
We proceed with degree two polynomials, where the bound produced coincides with the bound by Helleseth-Kl\o{}ve-Levenshtein \cite{HKL06}
for the maximal cardinality $|C|$ of a code $C$ with given minimum and maximum distances; this is also the bound for $k=1$
of Theorem 5.2 in \cite{BDHSS-ieee}. Here we give a proof which is direct from Theorem \ref{thm lp}.

\begin{theo}\label{thm lp1}
If
\begin{equation} \label{f1>0}
q(2d+\delta) \geq 2nq+2-2n-q,
\end{equation}
\begin{equation} \label{f0>0}
n(q-1)(nq-n+1)+nq(2d+\delta)>q^2(2nd+n\delta-d^2-d\delta),
\end{equation}
then
\begin{eqnarray}
\label{d2-bound}
 A_q(n,\{d,d+\delta\}) \leq \frac{d(d+\delta)q^2}{n(q-1)(nq-n+1)-q^2(2nd+n\delta-d^2-d\delta)+nq(2d+\delta)}.
\end{eqnarray}
If this bound is attained by an $(n,N,\{d,d+\delta\})_q$ code $C$, then $M_2(C)=0$ and,
moreover, $M_1(C)=0$ whenever \eqref{f1>0} is strict. In the latter case $C$ is an orthogonal array of strength 2.
\end{theo}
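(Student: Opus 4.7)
The plan is to apply Theorem \ref{thm lp} with the quadratic polynomial
\[
f(t) = \left(t - 1 + \frac{2d}{n}\right)\left(t - 1 + \frac{2(d+\delta)}{n}\right),
\]
which vanishes precisely at the two admissible inner products $1 - 2d/n$ and $1 - 2(d+\delta)/n$, so condition (A1) is satisfied automatically (with equality, which will be useful for the attainment analysis).

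The main work is to expand $f$ in the Krawtchouk basis, $f(t) = f_0 + f_1 Q_1^{(n,q)}(t) + f_2 Q_2^{(n,q)}(t)$, and to translate the sign conditions (A2) into the stated inequalities. First I would note that since $f$ has leading coefficient $1$ as a polynomial in $t$ and $Q_2^{(n,q)}(t)$ has a positive leading coefficient in $t$, the top coefficient $f_2$ is strictly positive for all admissible parameters, so the $f_2 \geq 0$ clause of (A2) is free. Next I would compute $f_1$ using the explicit form of $Q_1^{(n,q)}(t)$ under the change of variable $z = n(1-t)/2$ and check that the inequality $f_1 \geq 0$ reduces exactly to \eqref{f1>0}. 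Finally, I would extract $f_0$ (equivalently, the mean of $f$ against the Krawtchouk weight, using $K_0=1$, the explicit forms of $K_1^{(n,q)}$ and $K_2^{(n,q)}$, and the normalizations $r_i = (q-1)^i \binom{n}{i}$) and verify that $f_0 > 0$ is precisely \eqref{f0>0}.

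Once these identifications are made, Theorem \ref{thm lp} yields $A_q(n,\{d,d+\delta\}) \leq f(1)/f_0$, and a direct evaluation $f(1) = (2d/n)(2(d+\delta)/n) = 4d(d+\delta)/n^2$ combined with the explicit expression for $f_0$ gives, after cancellation of the factor $4/n^2$, the closed form \eqref{d2-bound}.

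For the extremality part I would reuse the Levenshtein-type identity \eqref{main-id}, exactly as in the second proof of Theorem \ref{th:4.1}. If $C$ attains equality in \eqref{d2-bound}, then the pair-sum on the left of \eqref{main-id} vanishes (because $f$ is zero at the only distances occurring in $C$), so every nonnegative term $f_i M_i(C)$ on the right is forced to be zero. Since $f_2 > 0$, this yields $M_2(C) = 0$; if in addition \eqref{f1>0} is strict then $f_1 > 0$ and $M_1(C) = 0$ as well. The simultaneous vanishing of $M_1(C)$ and $M_2(C)$ is Delsarte's characterization of an orthogonal array of strength $2$, giving the last claim.

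The only non-routine step is the bookkeeping that turns the Krawtchouk expansion into the precise algebraic forms appearing in \eqref{f1>0}, \eqref{f0>0} and the denominator of \eqref{d2-bound}; this is a direct but somewhat tedious substitution into $K_1^{(n,q)}$ and $K_2^{(n,q)}$. I do not anticipate any genuine difficulty beyond careful algebra.
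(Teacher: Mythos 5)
Your proposal is correct and follows essentially the same route as the paper: the authors also apply Theorem \ref{thm lp} to the polynomial $f(t)=\left(t-1+\frac{2d}{n}\right)\left(t-1+\frac{2(d+\delta)}{n}\right)$, compute its Krawtchouk coefficients $f_0,f_1,f_2$ explicitly, observe $f_2>0$ and identify $f_1\geq 0$, $f_0>0$ with \eqref{f1>0}, \eqref{f0>0}, and then read off the bound $f(1)/f_0$; the extremality claims follow from \eqref{main-id} exactly as you describe.
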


\begin{proof}
Consider the second degree polynomial
\[ f(t)=\left(t-1+\frac{2d}{n}\right)\left(t-1+\frac{2(d+\delta)}{n}\right). \]
The condition (A1) is obviously satisfied. For (A2), we find the Krawtchouk coefficients of $f(t)$ as follows
\begin{eqnarray*}
f_0 &=& \frac{4(n(q-1)(nq-n+1)-q^2(2nd+n\delta-d^2-d\delta)+nq(2d+\delta)}{n^2q^2}, \\
f_1 &=& \frac{4(q-1)(2dq+\delta q+2n+q-2nq-2)}{nq^2}, \\
f_2 &=& \frac{4(q-1)^2(n-1)}{nq^2}.
\end{eqnarray*}
It is obvious that $f_2>0$. Furthermore, $f_1 \geq 0$ and $f_0>0$ are equivalent to
\eqref{f1>0} and \eqref{f0>0}, respectively.
Therefore, provided \eqref{f1>0} and \eqref{f0>0}, we have
\[ A_q(n,\{d,d+\delta\}) \leq \frac{f(1)}{f_0}, \]
which gives the desired bound.
\end{proof}

If the right hand side of \eqref{d2-bound} is integer, we are able to find the distance distribution of $C$ by solving the system of equations
coming from  $A_d+A_{d+\delta}=|C|-1$ and $M_i(C)=0$, $i=1,2$. In the range of the tables this gives three nonexistence result, proving that
$A_2(12,\{6,10\}) \leq 19$ instead of 20, $A_2(20,\{10,14\}) \leq 27$ instead of 28, and $A_2(16,\{8,14\}) \leq 27$ instead of 28 from \eqref{d2-bound}.

\subsection{Upper bounds via spherical codes}

There is a natural relation between codes from $E_q^n$ and few-distance spherical codes.
First, the alphabet symbols $0,1,\ldots,q-1$ are mapped bijectively onto the vertices of the
regular simplex in $\mathbb{R}^{q-1}$. Then the codewords of any code $C \subset E_q^n$ can be send
(coordinate-wise) to $\mathbb{R}^{(q-1)n}$. It is not difficult to see that all obtained vectors have the
same length and after a normalization a spherical code $W \subset \mathbb{S}^{(q-1)n-1}$ is formed.

The code $W$ has the same cardinality as $C$, i.e., $|W|=|C|$, and its maximal inner product
is equal to $1-2dq/(q-1)n$, i.e., its squared minimum distance is $2dq/(q-1)n$. In our considerations,
the $q$-ary codes with distances $d$ and $d+\delta$ are mapped to spherical 2-distance codes with
squared distances $2dq/(q-1)n$ and $2(d+\delta)q/(q-1)n$. This implies a upper bound for
$A_q(n,\{d,d+\delta\})$ as follows.

\begin{theo}\label{thm sc}
Let $\frac{d}{d+\delta}=\frac{r}{s}$ in lowest terms. If $s-r \geq 2$ (in particular, if $(d,d+\delta)=1$)
or $s=r+1$ and $r>(\sqrt{2(q-1)n}-1)/2$, then
\[
A_q(n,\{d,d+\delta\}) \leq 2(q-1)n+1.
\]
\end{theo}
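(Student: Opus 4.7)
The plan is to push the $q$-ary code $C$ through the spherical embedding described just before the theorem, recognize the image as a two-distance spherical code on $\mathbb{S}^{m-1}$ with $m=(q-1)n$, and then apply the Larman-Rogers-Seidel integrality condition in its spherical form.

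First, I would set $m=(q-1)n$ and observe that the embedding sends $C$ bijectively to a spherical two-distance code $W\subset\mathbb{S}^{m-1}$ of the same cardinality, with squared chord distances
\[
a^{2}=\frac{2dq}{(q-1)n},\qquad b^{2}=\frac{2(d+\delta)q}{(q-1)n}.
\]
Hence $a^{2}/b^{2}=d/(d+\delta)=r/s$ in lowest terms, and the two inner products $\alpha=1-a^{2}/2$ and $\beta=1-b^{2}/2$ of $W$ will satisfy
\[
\frac{1-\alpha}{\alpha-\beta}=\frac{a^{2}}{b^{2}-a^{2}}=\frac{d}{\delta}=\frac{r}{s-r}.
\]

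Next I would invoke the spherical Larman-Rogers-Seidel theorem: any spherical two-distance code $W\subset\mathbb{S}^{m-1}$ with $|W|>2m+1$ must have $(1-\alpha)/(\alpha-\beta)$ equal to a positive integer $k$ with $k\leq(\sqrt{2m}-1)/2$. Because $\gcd(r,s)=1$ forces $\gcd(r,s-r)=1$, the fraction $r/(s-r)$ is an integer precisely when $s-r=1$. Hence if $s-r\geq 2$ the integrality requirement already fails and $|C|=|W|\leq 2m+1=2(q-1)n+1$ follows immediately. If instead $s=r+1$, the integer in question equals $r$, and the range bound $r\leq(\sqrt{2m}-1)/2=(\sqrt{2(q-1)n}-1)/2$ is the exact negation of the hypothesis $r>(\sqrt{2(q-1)n}-1)/2$; so in this case too the assumption $|W|>2m+1$ leads to a contradiction.

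The main obstacle is pinning down the precise sphere-refined form of the Larman-Rogers-Seidel bound. The classical Euclidean statement yields only $|X|\leq 2m+3$ under the same integrality failure; the drop by two on $\mathbb{S}^{m-1}$ comes from the extra scalar constraint $\langle x,x\rangle=1$, which cuts the dimension of the polynomial test space used in the standard Gram-matrix rank argument. I would either cite this sphere version directly from the literature on few-distance spherical codes or reproduce the short annihilator-polynomial proof, applying the polynomial $F(t)=(t-\alpha)(t-\beta)$ to the Gram matrix of $W$ together with the sphere relation to carry out the dimension count.
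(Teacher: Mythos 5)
Your proposal is correct and follows essentially the same route as the paper: embed $C$ as a spherical two-distance code in $\mathbb{S}^{(q-1)n-1}$ and apply the Larman--Rogers--Seidel integrality condition with the threshold lowered from $2m+3$ to $2m+1$. The one point you hedge on is settled in the paper simply by citing Neumaier (Indag.\ Math.\ 43 (1981)), whose improvement to $2m+1$ holds for arbitrary Euclidean two-distance sets, so no sphere-specific refinement is needed.
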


\pr A classical results by Larman, Rogers, and Seidel \cite{LRS77} states that if the cardinality of a 2-distance set $W \subset \mathbb{R}^m$
with distances $a$ and $b$, $a <b$, is greater than $2m + 3$, then the ratio $a^2/b^2$ is equal to $(k-1)/k$,
where $k \in [2,(\sqrt{2m}+1)/2]$ is a positive integer. The restriction $2m+3$ was moved to $2m+1$ by Neumaier \cite{Neu81}.

For $W$ as above, we have $a^2/b^2=d/(d+\delta)=r/s$ and $m=(q-1)n$. This immediately implies our claim in the case $s-r \geq 2$ 
(this always happens if $(d,d+\delta)=1$, since $\delta>1$). If
$s=r+1$, we need in addition $r \not\in [1,(\sqrt{2(q-1)n}-1)/2]$ to have again the required bound. \qed

\begin{coro} \label{sph-lin}
In the context of Theorem \ref{thm sc}, if $q$, $n$, $d$, $\delta$, and $k$ are such that
\[ 2(q-1)n+1<q^k, \]
then there exist no linear codes $C \subset E_q^n$ with distances $d$ and $d+\delta$ and dimension at least $k$.
\end{coro}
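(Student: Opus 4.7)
The plan is to derive the corollary by an immediate cardinality comparison: if a violating linear code existed, then on the one hand its size would be at least $q^k$ by linearity, and on the other hand Theorem \ref{thm sc} would bound it above by $2(q-1)n + 1$, contradicting the hypothesis $2(q-1)n + 1 < q^k$.

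More concretely, I would suppose for contradiction that a linear code $C \subseteq \F_q^n$ with $\dim C \geq k$ and all pairwise distances lying in $\{d, d+\delta\}$ exists, and note that $|C| = q^{\dim C} \geq q^k$. The phrase ``in the context of Theorem \ref{thm sc}'' imports the hypotheses that $d/(d+\delta) = r/s$ in lowest terms satisfies either $s - r \geq 2$ (automatic when $(d, d+\delta) = 1$, since $\delta > 1$) or $s = r + 1$ with $r > (\sqrt{2(q-1)n} - 1)/2$. Under either of these, Theorem \ref{thm sc} gives
\[
|C| \leq A_q(n, \{d, d+\delta\}) \leq 2(q-1)n + 1,
\]
and combining this with $|C| \geq q^k$ contradicts the standing assumption $2(q-1)n + 1 < q^k$.

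There is essentially no obstacle; the argument is a single chain of inequalities. The only point of substance worth remarking is that linearity of $C$ enters only through the identity $|C| = q^{\dim C}$, while the spherical-code embedding underlying Theorem \ref{thm sc} is insensitive to the linear structure. Consequently the corollary could equivalently be stated as: no (linear or nonlinear) code with the given two distances and cardinality exceeding $2(q-1)n + 1$ exists; the formulation given in the paper is simply the most convenient one for ruling out linear codes via their dimension.
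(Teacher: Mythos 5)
Your argument is correct and is exactly the intended one: the paper states this as an immediate corollary (with no written proof), relying on precisely the comparison $q^{\dim C}\geq q^k > 2(q-1)n+1 \geq A_q(n,\{d,d+\delta\})$ from Theorem \ref{thm sc}. Your closing remark that linearity enters only via $|C|=q^{\dim C}$ is also consistent with how the paper uses the corollary.
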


The tables in Section 5 show that the bound of Theorem \ref{thm sc} is usually better than the best linear programming bound
when $s-r \geq 2$ while it has restricted influence when $s=r+1$.

Corollary \ref{sph-lin} shows, in particular, cases where the linear codes are not the best. For  $q=2$, such cases are
$(n,d,d+\delta)=(11,4,6)$, $(12,4,6)$ and $(15,6,8)$.

\subsection{Some simple cases}

In this section we assume (without loss of generality) that codes under consideration possess the zero word. Then all other words have weights $d$
and $d+\delta$. We start with cases where exactly two distances are impossible, i.e. $A_q(n,\{d,d+\delta\})$ is not well defined. 

\begin{lem} \label{lem-odd-d-even-delta}
There exist no codes $C \subset E_2^n$ with exactly two distances $d$ and $d+\delta$ when:

{\rm (a)} $d$ and $d+\delta$ are both odd; 

{\rm (b)} $d$ is odd, $d+\delta$ is even and $n < (3d-\delta)/2$;

{\rm (c)} $d$ is odd, $d+\delta$ is even and $d<\delta$;

{\rm (d)} $d+\delta=n$ and $n \neq 2d$;

{\rm (e)} $d+\delta=n-1$ and $2d>n+1$.
\end{lem}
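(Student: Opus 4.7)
My plan is to exploit three elementary facts that hold whenever $\bbz \in C$: (i) every nonzero codeword of $C$ has weight $d$ or $d+\delta$; (ii) the parity identity $d(x,y) \equiv \wt(x)+\wt(y) \pmod{2}$; and (iii) the support inequalities $0 \leq |\supp(x)\cap\supp(y)| = (\wt(x)+\wt(y)-d(x,y))/2$ and $|\supp(x)\cup\supp(y)| = (\wt(x)+\wt(y)+d(x,y))/2 \leq n$. In all five cases the goal is to force $|C| \leq 2$, so that the two required distances cannot both be realized.

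Part (a) uses (ii) alone: two nonzero codewords have two odd weights and hence an even pairwise distance, contradicting the requirement that both $d$ and $d+\delta$ be odd. Part (d) is almost as short: the distance $n = d+\delta$ can appear only between complementary codewords; any complementary pair $\{c,\bbo-c\}$ with $c \neq \bbz, \bbo$ would require both $\wt(c)$ and $n-\wt(c)$ to lie in $\{d,d+\delta\}$, which under $n \neq 2d$ is impossible. So $\bbo \in C$, and intersecting the weight conditions imposed by $\bbz$ and $\bbo$ gives $\wt(x) \in \{0,d,n\} \cap \{0,n-d,n\} = \{0,n\}$, forcing $C = \{\bbz,\bbo\}$.

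For (b) and (c), $d$ odd and $d+\delta$ even makes $\delta$ odd, so by (ii) two weight-$d$ codewords sit at distance $d+\delta$ (intersection $(d-\delta)/2$), two weight-$(d+\delta)$ codewords at distance $d+\delta$ (intersection $(d+\delta)/2$), and a mixed pair at distance $d$ (intersection $(d+\delta)/2$). For (c), $d<\delta$ makes the first intersection negative and the third exceed $d$, leaving only the all-weight-$(d+\delta)$ scenario, in which only the single distance $d+\delta$ occurs. For (b), the three pair configurations all give $|\supp(u)\cup\supp(v)| \geq (3d+\delta)/2$, while the hypothesis $n < (3d-\delta)/2$ together with $\delta \geq 1$ yields $n < (3d+\delta)/2$, so no two nonzero codewords can coexist and $|C| \leq 2$.

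For (e), the hypothesis $2d > n+1$ together with the support bound implies that any two weight-$d$ codewords lie at distance at most $2(n-d) < d+\delta$, so a pure weight-$d$ scenario produces only the distance $d$. Two weight-$(n-1)$ codewords sit at distance $2$, which lies outside $\{d,d+\delta\}$ since $n \geq 4$ forces $d \geq 3$. Finally, a mixed pair $(x,c)$ with $\wt(c) = n-1$ has $|\supp(x)\cap\supp(c)| \in \{d-1,d\}$, so $d(x,c) \in \{\delta,\;n+1-d\}$; checking each of the four equations $\delta \in \{d,d+\delta\}$ and $n+1-d \in \{d,d+\delta\}$ against $2d > n+1$ and $d+\delta = n-1$ yields a contradiction. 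Hence weight-$d$ and weight-$(n-1)$ codewords cannot coexist, so only one distance is realized either way. The main bookkeeping obstacle is precisely this case-(e) enumeration, where the borderline values $n = 2d\pm 1$ and $d = 2$ must all be excluded using the standing hypothesis $n \geq 4$.
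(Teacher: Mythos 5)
Your proof is correct and follows essentially the same route as the paper's: place $\mathbf{0}$ in $C$, use the parity identity $d(x,y)=\wt(x)+\wt(y)-2\wt(x*y)$ together with the support intersection/union bounds, and split into cases according to the weights of a pair of nonzero codewords. If anything, your treatment of cases (c) and (e) is slightly more explicit than the paper's, which leaves the all-weight-$(d+\delta)$ and two-weight-$(n-1)$ subcases implicit.
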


\pr Suppose that $|C| \geq 3$ and $x,y \in C \setminus \{\mathbf{0}\}$ are distinct.

(a) Since $d(x,y)=\wt(x)+\wt(y)-2\wt(x*y)$ is even, we obtain a contradiction. 

(b) Using, as in (a) the equality $d(x,y)=\wt(x)+\wt(y)-2\wt(x*y)$ we see that
$\wt(w*y) \in \{(d-\delta)/2,(d+\delta)/2\}$. Therefore
$n \geq d+(d-\max \wt(x*y))=(3d-\delta)/2$, a contradiction. 

(c) If $x$ and $y$ have distinct weights, then $d(x,y)= \wt(x)+\wt(y)-2\wt(x*y)=2d+\delta-2\wt(x*y)$
is odd, thus equal to $d$. Then $2d<d+\delta=2\wt(x*y) \leq 2\min \{\wt(x),\wt(y)\}=2d$, a contradiction. 
If $\wt(x)=\wt(y)=d$, then $d(x,y)$ is even, thus equal to $d+\delta$ and we get a contradiction as above. 

(d) If $x$ and $y$ have distinct weights, then $d(x,y)=n-d \not\in \{d,n\}$ and if $\wt(x)=\wt(y)=d$, then $d(x,y)=n$ is impossible, 
since it leads to $n=2d$. 

(e) Similarly to (d) we see that $d(x,y)=n-d \pm 1 <n-1$ if $x$ and $y$ have distinct weights and $d(x,y)=n-1$ is impossible when 
$\wt(x)=\wt(y)=d$.
\qed

\begin{lem} \label{lem-odd-d=delta}
For $q=2$, if $d$ is odd and $|C|>4$, then
\[ A_2(n,\{d,2d\}) = 1+\left[ \frac{n}{d}\right]. \]
\end{lem}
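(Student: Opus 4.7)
My plan is to translate so that $\bbz \in C$ (without loss of generality for Hamming distance) and to write $|C| = 1 + m + M$, where $m$ and $M$ count the codewords of weight $d$ and of weight $2d$ respectively. The lower bound is realized by the code consisting of $\bbz$ together with the characteristic vectors of $[n/d]$ pairwise disjoint $d$-subsets of $\{1,\ldots,n\}$: its distances are $d$ (from $\bbz$ to each nonzero word) and $2d$ (between pairs of nonzero words). For the upper bound, the key is a parity trick. For any two distinct nonzero codewords $x, y$, one has $d(x,y) = \wt(x)+\wt(y)-2|\supp(x) \cap \supp(y)|$, and since $d$ is odd while $2d$ is even, the parity of $\wt(x)+\wt(y)$ forces the value of $d(x,y)$ in $\{d,2d\}$ and hence pins down the intersection size in each weight combination. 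Concretely: two weight-$d$ codewords have disjoint supports, every weight-$d$ support is contained in every weight-$2d$ support, and any two weight-$2d$ supports meet in exactly $d$ positions.

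A short case analysis on $(m, M)$ then completes the proof. If $M = 0$, the weight-$d$ supports are disjoint, so $m \leq [n/d]$ and $|C| \leq 1+[n/d]$. If $m \geq 2$ and $M \geq 1$, then for any weight-$2d$ codeword $y$ and weight-$d$ codewords $x_1, x_2$, the disjoint sets $\supp(x_1), \supp(x_2) \subseteq \supp(y)$ have total size $2d = |\supp(y)|$, so $\supp(y) = \supp(x_1) \sqcup \supp(x_2)$; this determines $y$ uniquely (hence $M=1$) and rules out a third weight-$d$ codeword (hence $m=2$), forcing $|C|=4$ and contradicting $|C|>4$. If $m=1$ and $M\geq 2$, the unique weight-$d$ codeword $x$ satisfies $\supp(x) \subseteq \supp(y) \cap \supp(y')$ for any two weight-$2d$ codewords $y, y'$, and the equality $|\supp(x)|=d=|\supp(y) \cap \supp(y')|$ yields a sunflower with kernel $\supp(x)$ and $M$ disjoint petals of size $d$; hence $(M+1)d \leq n$, giving $M \leq [n/d]-1$ and $|C| \leq 1+[n/d]$. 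The remaining small configurations satisfy $|C| \leq 4$, while the case $m=0$ is excluded because then only distance $2d$ would occur, so $d$ would fail to be the minimum distance of $C$.

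The whole argument is combinatorial. The only mildly delicate step is extracting the sunflower structure in the case $m=1,\,M\geq 2$, but this falls out immediately from the sharp cardinality equality. The essential use of $d$ being odd is in the parity trick, which pins down all intersection sizes exactly and reduces the problem to a finite case analysis rather than the more intricate integer-programming situation that arises when $d$ is even.
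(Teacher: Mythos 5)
Your proof is correct and follows essentially the same route as the paper's: the parity of $d(x,y)=\wt(x)+\wt(y)-2\wt(x*y)$ pins down all support intersections exactly, and a case analysis on the number of weight-$d$ words yields the bound $1+\left[n/d\right]$, attained by the zero word together with $\left[n/d\right]$ disjoint weight-$d$ supports. Your write-up is in fact slightly more complete than the paper's, which leaves the $A_d=0$ case and the sunflower count for $A_d=1$ implicit.
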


\pr Let $A_d$ (resp. $A_{2d}$) be the number of the words of weight $d$ (resp. $2d$). Similarly to above we see that if
$\wt(x)=\wt(y)=d$, then $\supp(x) \cap \supp(y)=\phi$. This means that $A_d \leq [n/d]$. Moreover, since $\supp(x) \subset \supp(y)$
for any two words $x$ and $y$ of weights $d$ and $2d$, respectively, it follows that if $A_d \geq 3$, then $A_{2d}=0$, if $A_d=2$, then
$A_{2d}=1$ and $|C|=4$, and if $A_d=1$, then the supports of all words of weight $2d$ contain
the support of the single word of weight $d$ and therefore $A_{2d} \leq [n/d]-1$. In all cases $|C| \leq 1+[n/d]$. It is obvious
from the above how this bound is attained. \qed

\begin{lem} \label{lem-d=1}
We have $A_3(n,\{1,3\}) = 6$ for every $n \geq 4$.
\end{lem}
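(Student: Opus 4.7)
The plan is to translate $C$ by a fixed codeword so that $\mathbf{0}\in C$ (this preserves pairwise Hamming distances and the two-distance property). Then every nonzero codeword of $C$ has weight $1$ or $3$. Writing $W_1$ and $W_3$ for the sets of codewords of those weights, I want to show $1+|W_1|+|W_3|\le 6$ and exhibit six codewords realizing equality.

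A case analysis on $|W_1|$ will do the work. Two weight-$1$ codewords $e_i a_1,e_j a_2$ are at distance $2$ if $i\ne j$ and at distance $1$ if $i=j$ with $a_1\ne a_2$; since distance $2$ is forbidden, $|W_1|\le 2$. If $|W_1|=2$ at some position $i$, any $y\in W_3$ would need $y_i\notin\{1,2\}$, impossible in $\F_3$, so $|C|=3$. If $|W_1|=1$, say the weight-$1$ codeword is $e_1\cdot 1$, then every $y\in W_3$ has $y_1=2$, and erasing the first coordinate sets up a bijection between $W_3$ and a set $W_3'$ of weight-$2$ codewords in $\F_3^{n-1}$ whose pairwise distances still lie in $\{1,3\}$. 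The main task becomes to show $|W_3'|\le 4$, giving $|C|\le 6$.

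For this bound, note that codewords with disjoint supports are at distance $4$, so the supports form an intersecting family of $2$-subsets of $\{2,\dots,n\}$, which is either a star (all containing a common $p$) or a triangle $\{\{a,b\},\{a,c\},\{b,c\}\}$; each support hosts at most $2$ codewords because $\{1,2\}^2$ with the ``differ in one coordinate'' relation is a $4$-cycle of clique number $2$. In the star case, codewords on different supports are at distance $2$ when their values at $p$ coincide and $3$ when they differ, so $v_p$ must vary across supports; with only two nonzero values at most two supports are populated and $|W_3'|\le 4$. In the triangle case, if some support $\{a,b\}$ carries $2$ codewords they must differ at one of its coordinates (say $a$), and then the support $\{a,c\}$ sharing $a$ is forced to be empty (its codewords would need $v_a$ different from both values already used), so again at most two supports are populated and $|W_3'|\le 4$. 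Finally, when $|W_1|=0$, the required minimum distance $1$ is realized by two codewords of $W_3$ on a common support $S$ agreeing on two coordinates $\{i,j\}\subset S$; a direct computation (parallel to the star-case constraint at $p$) shows every other codeword of $W_3$ has $\{i,j\}$ contained in its support, so viewing codewords as $(y_i,y_j,y_x)$ with label $(y_i,y_j)\in\{1,2\}^2$ and running the analogous count yields $|W_3|\le 4$ and $|C|\le 5$.

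The bound $6$ is attained in $\F_3^4$ (and, by appending zero coordinates, in $\F_3^n$ for every $n\ge 4$) by the six codewords $\mathbf{0},(2,0,0,0),(1,1,1,0),(1,1,2,0),(1,2,0,1),(1,2,0,2)$; a direct check shows all $15$ pairwise distances lie in $\{1,3\}$ and both values occur. The main obstacle is the bound $|W_3'|\le 4$, and within it the triangle subcase, where one must track carefully which coordinate of each support hosting two codewords ``pivots'' between its codewords, to rule out configurations that would force some ternary value to differ from both nonzero elements of $\F_3$ simultaneously.
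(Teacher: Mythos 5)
Your proof is correct and follows essentially the same route as the paper: normalize so that $\mathbf{0}\in C$, bound the number of weight-one words, and case-analyze the supports and values of the weight-three words (the paper streamlines this by translating an endpoint of a distance-$1$ pair to $\mathbf{0}$, so your extra cases $|W_1|=0$ and $|W_1|=2$ collapse into its single WLOG). Your star/triangle organization of the intersecting supports and the explicit code $\{\mathbf{0},2000,1110,1120,1201,1202\}$ match the paper's argument and example up to relabelling.
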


\pr
Observe that the ternary code $C=\{0000,1000,2110,2120,2201,2202\}$ has distances 1 and 3 and cardinality 6. It can be extended by zero coordinates
to any length $n \geq 4$. Therefore $A_3(n,\{1,3\}) \geq 6$ for $n \geq 4$.

Let $C$ be a maximal $(n,N,\{1,3\})_3$ code of length $n \geq 4$. Without loss of generality we may assume that ${\mathbf 0}$ and $10\ldots 0$ belong to $C$.
Then it is obvious that no more words of weight 1 are possible apart from $20\ldots 0$ in which case $C$ is not maximal.
The words of weight 3 can only have 2 as a first coordinate. We can assume that $x=2110\ldots 0 \in C$. If $y \in C\setminus \{x\}$ has the same support as $x$,
then $d(x,y)=1$ and $y=2210\ldots 0$ without loss of generality and it is easy to see now that no other words with the same support can be added. If  
$z \in C$ has weight 3 and $\supp(x) \neq \supp(z)$, then $d(x,z)=d(y,z)=3$ is only possible and this implies that the first three digits of $z$ are
$202$ and we have only 2 possibilities to complete $z$. This gives $|C| \leq 1+1+2+2=6$ (realized above). If $\supp(x) \neq \supp(y)$ for every $y \in C \setminus \{x\}$ of weight 3, then $d(x,y)=3$ and there are only two possibilities for the first three digits of $y$, $220$ or $202$. Moreover, the third 
nonzero digit of any such $y$ should be at the same position (otherwise distance 2 will appear), which implies that $|C| \leq 5$. 
\qed

The cases covered by Lemmas \ref{lem-odd-d-even-delta}-\ref{lem-d=1} are excluded from the tables below. Other similar
cases can be dealt with as well (for example, one can prove that $A_4(n,\{1,3\})=12$). We formulate as conjectures two observations.

\begin{conj} \label{cc1} (i) $A_q(n,\{2,4\}) = {n \choose 2}+1$ for every $n \geq 6$ and $q \in \{2,3,4\}$;

(ii) $A_2(n,\{2,2+\delta\}) = n$ for every $\delta \geq 3$ and every $n \geq 6$, except for $A_2(2,\{2,n-1\})=n+1$.
\end{conj}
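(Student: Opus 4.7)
The plan is to split the argument by the two parts of the conjecture, and within each part to treat the lower and upper bounds separately.

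For the lower bounds, explicit constructions suffice. In part (i), set $C_{\mathrm{lo}}=\{\bbz\}\cup\{\be_i+\be_j:1\le i<j\le n\}$, where $\be_i$ is the standard unit vector with nonzero entry equal to $1\in E_q$. Any two weight-$2$ codewords are at distance $2$ if their supports share a coordinate and $4$ if disjoint, so $C_{\mathrm{lo}}$ is an $(n,\binom{n}{2}+1,\{2,4\})_q$-code for every $q\ge 2$. In part (ii), set $C_{\mathrm{lo}}=\{\bbz\}\cup\{\be_1+\be_i:2\le i\le n\}$, a ``star'' of weight-$2$ codewords through position $1$; it is equidistant with distance $2$ and qualifies as an $(n,n,\{2,2+\delta\})_2$-code for every $\delta\ge 1$, realizing the claimed cardinality $n$.

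For the upper bounds I would normalize $\bbz\in C$ (translating when $q=2$) and split $C\setminus\{\bbz\}=C_2\cup C_{2+\delta}$ by weight. In the binary case of (i), weight-$2$ codewords correspond to edges of $K_n$ and are at distance $2$ (share a vertex) or $4$ (disjoint), so $C_2$ is unrestricted by itself, while every weight-$4$ codeword $S$ must meet each edge of $C_2$, excluding the $\binom{n-4}{2}$ edges inside $S^c$; a counting argument (distinct weight-$4$ codewords have distinct complementary pairs when $n=6$, and $\binom{n-4}{2}\ge 3$ when $n\ge 7$) gives $|C_2|+|C_{2+\delta}|\le\binom{n}{2}$. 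For $q\in\{3,4\}$, two weight-$2$ codewords whose supports overlap are at distance $2$ only when their symbols agree on the shared coordinate, forcing a consistent nonzero-symbol coloring on the overlap graph of the supports in $C_2$; a direct check shows that using more than one value-pair on a single support strictly reduces the codewords available on incident supports, so again $|C_2|\le\binom{n}{2}$, and the weight-$4$ interaction is controlled as in $q=2$. For part (ii), the assumption $\delta\ge 3$ gives $2+\delta\ge 5$, so two weight-$2$ binary codewords cannot realize distance $2+\delta$; consequently $C_2$ is a pairwise-intersecting family of edges and, for $n\ge 6$, has size at most $n-1$ (either a star or a triangle). A weight-$(2+\delta)$ codeword $S$ then imposes $d(S,e)\in\{2,2+\delta\}$ on every $e\in C_2$, pinning $|e\cap S|\bmod 2$ and severely restricting $e$; pairs of weight-$(2+\delta)$ codewords satisfy similarly rigid intersection constraints, and a finite case analysis organized by the structure of $C_2$ and the size of $C_{2+\delta}$ yields $|C|\le n$.

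The main obstacle is the upper bound in the non-binary case of (i): the linear programming bound of Theorem~\ref{thm lp1} does not apply as stated, since for $q\in\{3,4\}$ the Krawtchouk coefficient $f_1$ of the polynomial $(t-1+4/n)(t-1+8/n)$ becomes negative already at $q=3$, $n=6$, $d=\delta=2$, so hypothesis (A2) fails. The fix I envision is either to replace that polynomial by a modification $f(t)+\alpha(t-1+2/n)$ with $\alpha>0$ chosen to restore $f_1\ge 0$ while keeping $f(1)/f_0$ close to $\binom{n}{2}+1$, or else to push the combinatorial extremal argument through by a more detailed analysis of the nonzero-symbol colorings admitted on the support-overlap graph; combining an LP certificate bounding $|C_2|$ with the structural exclusion of mixed weight-$2$/weight-$(2+\delta)$ configurations should close the gap.
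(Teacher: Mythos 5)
The statement you are proving is labelled a \emph{conjecture} in the paper, and the paper does not prove it: it supplies only the lower-bound constructions (all weight-$2$ words with entries equal to $1$ plus the zero word for (i); a block of $\delta+2$ words of weight $\delta+2$ on the first $\delta+3$ coordinates together with a star of weight-$2$ words for (ii)), plus a partial argument establishing (ii) for odd $\delta$. Your lower bound for (i) coincides with the paper's. Your lower bound for (ii) does not: the star $\{\bbz\}\cup\{\be_1+\be_i\}$ is equidistant with the single distance $2$, so it never realizes the distance $2+\delta$; in the convention of Section 4.4 and the tables (where equidistant codes are flagged separately) it does not witness $A_2(n,\{2,2+\delta\})\ge n$, which is exactly why the paper uses the more elaborate construction.

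The upper bounds are where the genuine gap lies, and it is not a repairable detail but the entire open content of the conjecture. For (i) with $q=2$ you assert that a counting argument over the $\binom{n-4}{2}$ edges avoided by each weight-$4$ codeword ``gives $|C_2|+|C_{2+\delta}|\le\binom{n}{2}$,'' but no such argument is exhibited, and the pairwise constraint $|S\cap T|\in\{2,3\}$ between weight-$4$ codewords is not analyzed at all; the paper's own LP bounds in the tables (e.g. $67$--$77$ at $n=12$) show that the claimed upper bound is not currently known. For $q\in\{3,4\}$ you correctly observe that condition \eqref{f1>0} of Theorem \ref{thm lp1} fails (e.g. $18\ge 23$ is false at $q=3$, $n=6$, $d=\delta=2$) and then propose two possible fixes without carrying either out; moreover your distance computations implicitly use the binary identity $d(x,y)=\wt(x)+\wt(y)-2\wt(x*y)$, which is invalid for $q\ge 3$, so even the structural ``coloring'' argument for $C_2$ needs to be redone from the correct $q$-ary distance formula. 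For (ii) the reduction of $C_2$ to a star or triangle is fine, but the concluding ``finite case analysis'' bounding the contribution of $C_{2+\delta}$ is not performed; note that the paper itself only manages this for odd $\delta$, using the parity of distances between weight-$(\delta+2)$ words. As written, the proposal is a research plan with the decisive steps missing, not a proof.
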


The code consisting of all words of weight 2 and the zero word has distances 2 and 4 and cardinality $ {n \choose 2}+1$. 
This provides the lower bound for (i) which is obviously valid for all $q$.

The lower bound in (ii) is given by the following construction: take all words of length $\delta+3$, weight $\delta+2$ and first 
coordinate 1 (their number is $\delta+2$) and add zero coordinates up to length $n$, then add all words of weight 2 with first
coordinate 1 followed by $\delta+2$ zeros (their number is $n-(\delta+3)$ and finally add the zero word. Note that the code 
consisting of the zero word and all words of weight 2 with nonzero first coordinate gives $n$ words at distance 2 from each 
other and has the same cardinality; if $2+\delta=n-1$, the word of weight $n-1$ with zero first coordinate can be added to get 
 $A_2(2,\{2,n-1\}) \geq n+1$. 

If $\delta$ is odd in (ii), then any two words of weight $\delta+2$ are at even distance, i.e. at distance 2, and therefore share 
common $\delta+1$ nonzero coordinates. The addition of a third word of weight $\delta+2$ gives two possibilities -- a $3 \times (\delta+1)$ or
a $3 \times \delta$ all-ones block. Both determine uniquely the continuation which leads to the above construction in the latter case 
and to a worse cardinality in the former case (note that all words of weight 2, if more than 3, necessarily share the same nonzero
coordinate). The combination of these observations leads to the construction above and proves (ii) for odd $\delta$.

\section{Tables for $A_q(n,\{d,d+\delta\})$}
\label{tables}

\subsection{Randomly generated codes} \label{rand}

We use a simple computer program in Java for random generation of good codes. For fixed length $n$, alphabet size $q$
and distances $d$ and $d+\delta$ the program starts filling into a code $C$ with the zero codeword and the word $(11\ldots100\ldots0)$
of weight $d$. The search space consists of all vectors of weights $d$ and $d+\delta$ of length $n$. Then the program chooses randomly and
adds vectors from the search space until the resulting code is good (i.e., until it has only distances $d$ and
$d+\delta$). The user is able to ask for a verification and finding the distance distribution of all generated codes at any stage of the implementation or later. 

The cardinalities of our best randomly generated codes found along with these of some codes from Lemma 1 are shown as lower bounds in the tables below. 
Some cases of attaining the linear programming bound show that the random approach can be very good; for example, see
the entries $A_2(11,\{2,4\})=56$,  $A_2(18,\{2,4\})=154$,  $A_2(10,\{4,6\})=16$,  $A_2(14,\{4,8\})=64$,
 $A_3(10,\{3,6\})=81$, $A_3(11,\{6,9\})=A_3(12,\{6,9\})=243$, $A_3(12,\{8,12\})=36$, and $A_4(7,\{4,6\})=64$. 

Our Java program and all codes constructed as explained above are available upon request. All best codes will be published on Internet. 

\subsection{Tables}

In this section we present tables with lower and upper bounds for the function $A_q(n,\{d,d+\delta\}$ for
lengths $7 \leq n \leq 20$ for $q=2$, $7 \leq n \leq 14$ for $q=3$ and $7 \leq n \leq 12$ for $q=4$. 
Horizontally we give $d$, vertically $n$. The lower bounds come from computer generated random codes. 
There are three cases of optimal codes from Lemma 1, namely $A_2(8,\{4,8\})=16$, $A_3(9,\{6,9\})=27$, and
$A_4(8,\{6,8\})=32$ (the first two found by our program as well).  In some cases we left an equidistant code since it 
seems quite better than any code with exactly two distances; these codes are marked by $e$. 

The upper bounds are taken from the best of the linear programming bound obtained by the simplex method (unmarked
or marked by $lp$ when seemingly good), the bound \eqref{d2-bound} (marked with $d2$) or its refinements as explained in 
the end of Section 4.2 (marked by $dd$), the best known upper bound on $A_q(n,d)$ [8] (marked with $*$), and the bound 
from Theorem \ref{thm sc} (marked with $sc$).

Key to the tables:

$^{lp}$ -- upper bound by Theorem \ref{thm lp} (general simplex method), excluding cases of Theorem \ref{thm lp1};

$^{*}$ -- upper bound (exact value) from Brouwer's tables \cite{Bro};

$^{sc}$ -- upper bound by Theorem \ref{thm sc} (spherical codes);

$^{d2}$ -- upper bound by Theorem \ref{thm lp1} (special case of Theorem \ref{thm lp});

$^{dd}$ -- contradiction by distance distribution;

$^{e}$ -- this code is equidistant;

$-$ -- $A_q(n,\{d,d+\delta\})$ is not well-defined.

\begin{center}
{\scriptsize
\noindent
\begin{tabular}{|c|c|c|c|c|c|}
\hline
\multicolumn{6}{|c|}{$q=2$, $\delta=2$} \\
\hline
$n|d$ & 2 & 4 & 6 & 8 & 10   \\
\hline
7 &  22-26 & -- &  & & \\ \hline
8 & 29-36 & 10-12$^{d2}$ &  -- & &   \\ \hline
9 & 37-40 & 16$^{d2}$ &  -- & &    \\ \hline
10 & 46-56 &  16$^{d2}$ & 3 &  -- &   \\ \hline
11 & 56$^{lp}$ & 17-23$^{sc}$ & 6-12$^{*,d2}$ & -- &   \\ \hline
12 & 67-77 & 19-25$^{sc}$ & 16$^{d2}$ & -- & --    \\ \hline
13 & 79-87 & 23-40 & 17-19$^{d2}$ & 4* & --  \\ \hline
14 & 92-100 & 27-51 & 17-19$^{d2}$ & $8^{e,*}$ & --   \\ \hline
15 & 106-120 & 32-68 & 18-31$^{sc}$ & 16$^{e,d2}$ & --   \\ \hline
16 & 121-126 & 37-75 & 19-33$^{sc}$ & 17-20$^{d2}$ & --  \\ \hline
17 & 137-154 & 42-91 & 20-35$^{sc}$ & 19-22$^{d2}$ & 6$^{e,lp}$   \\ \hline
18 & 154$^{lp}$ & 46-116 & 20-37$^{sc}$ & 19-22$^{d2}$ & 10$^{e,lp}$  \\ \hline
19 & 172-189 & 52-123 & 21-39$^{sc}$ & 20-35 & 15-20$^{d2}$ \\ \hline
20 & 191-200 & 58-151 & 22-41$^{sc}$ & 20-41$^{sc}$ & 19-24$^{d2}$ \\ \hline
\end{tabular}
}
\end{center}

\begin{center}
{\scriptsize
\noindent
\begin{tabular}{|c|c|c|c|c|c|c|c|c|c|c|}
\hline
\multicolumn{10}{|c|}{$q=2$, $\delta=3$} \\
\hline
$n|d$ & 2 & 4 & 5 & 6 & 7 & 8 & 9 & 10 & 11  \\
\hline
7 & 7-8 & --  &  & & & & & &  \\ \hline
8 & 8-12 & 8$^{e,lp}$ & -- &  & & & & &  \\ \hline
9 & 9-14 & 8$^{e}$-10 & 4$^{lp}$ & -- & & & & &  \\ \hline
10 & 10-18 & 8-16 & 4 & 6$^{e,*}$ & -- & & & &   \\ \hline
11 & 11-19 & 8-16 & 4 & 12$^{e,d2}$ & -- & -- & & &   \\ \hline
12 & 12-24 & 10-21 & 4 & 12$^{e,lp}$ & 3 & -- & -- & &   \\ \hline
13 & 13-24 & 12-27$^{sc}$ & 4 & 14$^{lp}$ & 4 & -- & -- & -- &   \\ \hline
14 & 14-28 & 14-29$^{sc}$ & 5-8 & 14-27 & 4 & -- & -- & -- & --  \\ \hline
15 & 15-28 & 14-31$^{sc}$ & 7-16 & 14-27 & 6$^{lp}$ & 16$^{e,d2}$ & -- & -- & --  \\ \hline
16 & 16-32 & 14-33$^{sc}$ & 7-16 & 15-34 & 6$^{lp}$ & 16$^{e,lp}$ & 4$^{lp}$ & -- & --   \\ \hline
17 & 17-33 & 14-35$^{sc}$ & 8-18 & 15-50 & 6$^{lp}$ & 17-21 & 4-6 & -- & --   \\ \hline
18 & 18-36 & 14-37$^{sc}$ & 10-22 & 16-65 & 7-10$^{lp}$ & 17-29 & 8$^{lp}$ & 4 & 4$^{*,lp}$   \\ \hline
19 & 19-37 & 14-39$^{sc}$ & 13-35 & 16-70 & 9-20$^{lp}$ & 17-29 & 8$^{lp}$ & 20$^{e,lp}$ & 4$^{lp}$   \\ \hline
20 & 20-40 & 14-41$^{sc}$ & 17-41 & 16-89 & 11-20$^{lp}$ & 20-41$^{sc}$ & 8$^{lp}$ & 20$^{e,lp}$ & 4   \\ \hline
\end{tabular}
}
\end{center}

\begin{center}
{\scriptsize
\noindent
\begin{tabular}{|c|c|c|c|c|c|}
\hline
\multicolumn{6}{|c|}{$q=2$, $\delta=4$} \\
\hline
$n|d$ & 2 & 4 & 6 & 8 & 10    \\
\hline
7 & 8$^{lp}$ &  & & &    \\ \hline
8 & 8$^{lp}$ &16$^{*,d2}$ & & &     \\ \hline
9 & 9-16 & 16$^{lp}$ & & &      \\ \hline
10 & 10-18 & 16$^{lp}$ & -- & &   \\ \hline
11 & 12-23$^{sc}$ & 16-30 & 12$^{e,d2}$ & &    \\ \hline
12 & 12-25$^{sc}$ & 16-30 & 12-19$^{dd}$ & -- &    \\ \hline
13 & 13-27$^{sc}$ & 32-54 & 13-27$^{sc}$ & -- &    \\ \hline
14 & 14-29$^{sc}$ & 64$^{lp}$ & 14-29$^{sc}$ & 8$^{e,*}$ & --   \\ \hline
15 & 15-31$^{sc}$ & 64-88 & 16-31$^{sc}$ & 16$^{e,d2}$ & --    \\ \hline
16 & 16-33$^{sc}$ & 64-128 & 16-33$^{sc}$ & 21-24$^{d2}$ & --   \\ \hline
17 & 17-35$^{sc}$ & 64-150 & 17-35$^{sc}$ & 32-36 & 6$^{e,*}$   \\ \hline
18 & 18-37$^{sc}$ & 64-256 & 18-37$^{sc}$ & 64$^{d2}$ & 10$^{e,lp}$   \\ \hline
19 & 19-39$^{sc}$ & 64-256 & 20-39$^{sc}$ & 80-96$^{d2}$ & 20$^{e,lp}$   \\ \hline
20 & 20-41$^{sc}$ & 64-332 & 20-41$^{sc}$ & 80-96$^{d2}$ & 20-27$^{dd}$  \\ \hline
\end{tabular}
}
 \ \ \ 
{\scriptsize
\noindent
\begin{tabular}{|c|c|c|c|c|c|c|}
\hline
\multicolumn{6}{|c|}{$q=2$, $\delta=6$} \\
\hline
$n|d$ & 2 & 4 & 6 & 8 & 10    \\
\hline
9 & 10-16 &  & & &  \\ \hline
10 & 10-16 & --  &  & &  \\ \hline
11 & 11-23$^{sc}$ & 8$^{e}$-16  &  & &  \\ \hline
12 & 12-25$^{sc}$ & 8$^{e}$-19 &  -- & & \\ \hline
13 & 13-27$^{sc}$ & 8$^{e}$-26 &  24$^{lp}$ & &    \\ \hline
14 & 14-29$^{sc}$ & 8-26 &  24-26 & -- &  \\ \hline
15 & 15-31$^{sc}$ & 11-31$^{sc}$ & 24-27 & 16$^{e,d2}$ &  \\ \hline
16 & 16-33$^{sc}$ & 14-33$^{sc}$ & 24-29 & 16$^{e}$-27$^{dd}$ & --   \\ \hline
17 & 17-35$^{sc}$ & 14-35$^{sc}$ & 24-52 & 17-35$^{sc}$ & --  \\ \hline
18 & 18-37$^{sc}$ & 16-37$^{sc}$ & 24-52 & 17-37$^{sc}$ & 10$^{e,lp}$   \\ \hline
19 & 19-39$^{sc}$ & 18-39$^{sc}$ & 28-68 & 17-39$^{sc}$ & 20$^{e,lp}$   \\ \hline
20 & 20-41$^{sc}$ & 20-41$^{sc}$ & 48-123 & 17-41$^{sc}$ & 20-32   \\ \hline
\end{tabular}
}
\end{center}

\begin{center}
{\scriptsize
\noindent
\begin{tabular}{|c|c|c|c|c|c|c|c|c|c|}
\hline
\multicolumn{9}{|c|}{$q=2$, $\delta=5$} \\
\hline
$n|d$ & 2 & 4 & 6 & 7 & 8 & 9 & 10 & 11   \\
\hline
8 & 9-10   &  &  & & & & &    \\ \hline
9 & 9-10 &  --   &  & & & & &    \\ \hline
10 & 10-16 & 8$^{e}$-16 &  &  & & & &   \\ \hline
11 & 11-18 & 8$^{e}$-16 & -- &  &  & & &   \\ \hline
12 & 12-24 & 8-25 & 12$^{e,lp}$ & -- &  &  & &    \\ \hline
13 & 13-24 & 8-25 & 13$^{e}$-14 & 4$^{lp}$ & -- & &  &   \\ \hline
14 & 14-28 & 10-29$^{sc}$ &  13$^{e}$-19 & 4$^{lp}$ & -- & -- & &    \\ \hline
15 & 15-29 & 14-31$^{sc}$ &  14$^{e}$-28 & 4$^{lp}$ & 16$^{e,d2}$ & -- & -- &   \\ \hline
16 & 16-32 & 14-33$^{sc}$ & 14$^{e}$-33$^{sc}$ & 4$^{lp}$ & 16$^{e,lp}$ & 4$^{lp}$ & -- & --   \\ \hline
17 & 17-34 & 16-35$^{sc}$ &  14-35$^{sc}$ & 4$^{lp}$ & 16$^{e}$-18$^{lp}$ & 4$^{lp}$ & 6$^{e,*}$ & --  \\ \hline
18 & 18-36 & 18-37$^{sc}$ &  15-37$^{sc}$ & 4$^{lp}$ & 17$^{e}$-22$^{lp}$ & 4 & 10$^{e,*}$ & --   \\ \hline
19 & 19-38 & 18-39$^{sc}$ &  15-39$^{sc}$ & 4$^{lp}$ & 17$^{e}$-35$^{lp}$ & 4 & 20$^{e,*}$ & 4$^{lp}$   \\ \hline
20 & 20-40 & 18-41$^{sc}$ &  16-41$^{sc}$ & 4-6$^{lp}$ & 21-41$^{sc}$ & 4 & 20$^{e,lp}$ & 4$^{lp}$   \\ \hline
\end{tabular}
}
\end{center}

\begin{center}
{\scriptsize
\noindent
\begin{tabular}{|c|c|c|c|c|c|c|c|c|c|c|c|}
\hline
\multicolumn{12}{|c|}{$q=3$, $\delta=2$} \\
\hline
$n|d$ & 2 & 3 & 4 & 5 & 6 & 7 & 8 & 9 & 10 & 11 & 12  \\
\hline
7 & 22-57 & 13-21 & 19-28 & 8-15$^{d2}$ & & & & & & &   \\ \hline
8 & 29-81 & 13-23 & 19-37 & 17-28 & 9$^{e,*}$ & & & & & &   \\ \hline
9 & 37-86 & 13-30 & 19-57 & 17-35$^{d2}$ & 16-24$^{d2}$  & 6$^{e,*}$ & & & & &  \\ \hline
10 & 46-111 & 13-33 & 20-81 & 17-41$^{sc}$ & 28-36$^{d2}$ & 13-21$^{d2}$ & 6$^{e,*}$ & & & &  \\ \hline
11 & 56-158 & 13-33 & 20-125 & 17-45$^{sc}$ & 28-45 & 16-33$^{d2}$ & 12$^{e,*}$ & 4$^{e,*}$ & & &   \\ \hline
12 & 67-197 & 13-33 & 20-162 & 17-49$^{sc}$ & 28-49$^{sc}$ & 18-37$^{d2}$ & 18-30$^{d2}$ & 9$^{e,*}$ & 4$^{e,*}$ & &   \\ \hline
13 & 79-204 & 13-33 & 21-259 & 17-53$^{sc}$ & 28-53$^{sc}$ & 18-53$^{sc}$ & 19-40 & 18-27$^{d2}$ & 6$^{e,*}$ & 3* &   \\ \hline
14 & 92-249 & 13-33 & 21-275 & 17-57$^{sc}$ & 28-57$^{sc}$ & 18-57$^{sc}$ & 20-46 & 18-38$^{d2}$ & 11-15 & 6$^{e,*}$ & 3*   \\ \hline
\end{tabular}
}
\end{center}

\begin{center}
{\scriptsize
\noindent
\begin{tabular}{|c|c|c|c|c|c|c|c|c|c|c|c|}
\hline
\multicolumn{11}{|c|}{$q=3$, $\delta=3$} \\
\hline
$n|d$ & 1 & 2 & 3 & 4 & 5 & 6 & 7 & 8 & 9 & 10   \\
\hline
7 & 9-20 & 7-29 & 27$^{lp}$ & 9-27 & & & & & &     \\ \hline
8 & 9-28 & 8-33$^{sc}$ & 81$^{lp}$ & 9-33$^{sc}$ & 9-33$^{sc}$ & & & & &    \\ \hline
9 & 9-37$^{sc}$ & 9-37$^{sc}$ & 81$^{lp}$ & 9-37$^{sc}$ & 10-37$^{sc}$ & 27$^{*,d2}$  & & & &   \\ \hline
10 & 10-41$^{sc}$ & 10-41$^{sc}$ & 81$^{lp}$ & 10-41$^{sc}$ & 10-41$^{sc}$ & 81$^{*,d2}$ & 10-21 & & &    \\ \hline
11 & 10-45$^{sc}$ & 11-45$^{sc}$ & 81-91 & 12-45$^{sc}$ & 12-45$^{sc}$ & 243$^{*,d2}$ & 12-45$^{d2}$ & 12$^{e,*}$ & &    \\ \hline
12 & 12-49$^{sc}$ & 12-49$^{sc}$ & 81-106 & 12-49$^{sc}$ & 13-49$^{sc}$ & 243$^{lp}$ & 13-49$^{sc}$ & 13-33$^{d2}$ & 9$^{e,*}$ &   \\ \hline
13 & 12-53$^{sc}$ & 13-53$^{sc}$ & 81-139 & 13-53$^{sc}$ & 13-53$^{sc}$ & 243-448 & 15-53$^{sc}$ & 13-53$^{sc}$ & 24-27$^{d2}$ & 6$^{e,*}$   \\ \hline
14 & 14-57$^{sc}$ & 14-57$^{sc}$ & 81-162 & 14-57$^{sc}$ & 13-57$^{sc}$ & 243-729 & 15-57$^{sc}$ & 15-57$^{sc}$ & 31-57$^{sc}$ & 11-48$^{d2}$ \\ \hline
\end{tabular}}
\end{center}

\begin{center}
{\scriptsize
\noindent
\begin{tabular}{|c|c|c|c|c|c|c|c|c|c|c|c|}
\hline
\multicolumn{11}{|c|}{$q=3$, $\delta=4$} \\
\hline
$n|d$ & 1 & 2 & 3 & 4 & 5 & 6 & 7 & 8 & 9 & 10  \\
\hline
7 & 6-21 & 8-20 & 9$^{e}$-13 & & & & & & & \\ \hline
8 & 9-22 & 10-33$^{sc}$ & 10-21 & 17-23 & & & & & & \\ \hline
9 & 10-29 & 12-37$^{sc}$ & 15-32 & 18-51 & 9-33 &  & & & &  \\ \hline
10 & 12-33 & 12-41$^{sc}$ & 16-41$^{sc}$ & 36-61 & 10-41$^{sc}$ & 15$^{e}$-41$^{sc}$ & & & & \\ \hline
11 & 14-33 & 12-45$^{sc}$ & 16-45$^{sc}$ & 42-144 & 12-45$^{sc}$ & 15$^{e}$-45$^{sc}$ & 12-45$^{sc}$ & & &  \\ \hline
12 & 16-33 & 17-49$^{sc}$ & 16-49$^{sc}$ & 49-195 & 13-49$^{sc}$ & 22-49$^{sc}$ & 25-49$^{sc}$ & 36$^{d2}$ & &  \\ \hline
13 & 18-33 & 18-53$^{sc}$ & 16-53$^{sc}$ & 56-317 & 19-53$^{sc}$ & 22-53$^{sc}$ & 25-53$^{sc}$ & 36-85 & 27$^{e,d2}$ & \\ \hline
14 & 18-33 & 18-57$^{sc}$ & 16-57$^{sc}$ & 56-557 & 19-57$^{sc}$ & 22-57$^{sc}$ & 25-57$^{sc}$ & 36-108 & 27$^{e}$-57$^{sc}$ & 12-15 \\ \hline
\end{tabular}
}
\end{center}

\begin{center}
{\scriptsize
\noindent
\begin{tabular}{|c|c|c|c|c|c|c|c|c|c|c|}
\hline
\multicolumn{10}{|c|}{$q=3$, $\delta=5$} \\
\hline
$n|d$ & 1 & 2 & 3 & 4 & 5 & 6 & 7 & 8 & 9  \\
\hline
7 & 6-13 & 8-21 &  & &  & & & &  \\ \hline
8 & 6-23 & 9-33$^{sc}$ & 9$^{e}$-15 & &  & & & &  \\ \hline
9 & 10-34 & 9-37$^{sc}$ & 10-25 & 9-29 &  & & & &   \\ \hline
10 & 12-41$^{sc}$ & 12-41$^{sc}$ & 15-39 & 10-41$^{sc}$ & 9-33 & & & &   \\ \hline
11 & 14-45$^{sc}$ & 12-45$^{sc}$ & 17-45$^{sc}$ & 12-45$^{sc}$ & 12-45 & 15$^{e}$-45 & & &  \\ \hline
12 & 17-49$^{sc}$ & 12-49$^{sc}$ & 17-49$^{sc}$ & 13-49$^{sc}$ & 21-75 & 18$^{e}$-45 & 13-49$^{sc}$ & &   \\ \hline
13 & 17-53$^{sc}$ & 13-53$^{sc}$ & 17-53$^{sc}$ & 19-53$^{sc}$ & 31-140 & 22-53$^{sc}$ & 13-53$^{sc}$ & 13$^{e}$-53$^{sc}$ &   \\ \hline
14 & 17-57$^{sc}$ & 14-57$^{sc}$ & 17-57$^{sc}$ & 19-57$^{sc}$ & 50-271 & 22-57$^{sc}$ & 15-57$^{sc}$ & 17-57$^{sc}$ & 17-57$^{sc}$   \\ \hline
\end{tabular}
}
\end{center}

\begin{center}
{\scriptsize
\noindent
\begin{tabular}{|c|c|c|c|c|c|c|c|c|c|}
\hline
\multicolumn{9}{|c|}{$q=3$, $\delta=6$} \\
\hline
$n|d$ & 1 & 2 & 3 & 4 & 5 & 6 & 7 & 8  \\
\hline
7 & 4-7 & &  & &  & & &   \\ \hline
8 & 6-10 & 9-22 & & &  & & &   \\ \hline
9 & 6-17 & 10-34 & 9$^{e}$-18 & &  & & &   \\ \hline
10 & 8-41 & 10-41$^{sc}$ & 10-23 & 9-41$^{sc}$ & & & &   \\ \hline
11 & 10-42 & 12-45$^{sc}$ & 16-32 & 10-45$^{sc}$ & 8$^{e}$-35 & & &   \\ \hline
12 & 13-49$^{sc}$ & 13-49$^{sc}$ & 19-49$^{sc}$ & 13-49$^{sc}$ & 9-49$^{sc}$ & 25-45 & &   \\ \hline
13 & 13-53$^{sc}$ & 13-53$^{sc}$ & 24-53$^{sc}$ & 13-53$^{sc}$ & 10-53$^{sc}$ & 25-68 & 13-53$^{sc}$ &   \\ \hline
14 & 14-57$^{sc}$ & 14-57$^{sc}$ & 24-57$^{sc}$ & 14-57$^{sc}$ & 13-57$^{sc}$ & 28-106 & 14-57$^{sc}$ & 15-57$^{sc}$  \\ \hline
\end{tabular}
}
\end{center}

\begin{center}
{\scriptsize
\noindent
\begin{tabular}{|c|c|c|c|c|c|c|c|c|c|c|c|c|c|c|c|}
\hline
\multicolumn{11}{|c|}{$q=4$, $\delta=2$} \\
\hline
$n|d$ & 1 & 2 & 3 & 4 & 5 & 6 & 7 & 8 & 9 & 10 \\
\hline
7 & 12-28 & 22-64 & 13-43$^{sc}$ & 64$^{d2}$ & 14-32 & & & & &   \\ \hline
8 & 12-28 & 29-112 & 13-49$^{sc}$ & 64-146 & 17-49$^{sc}$ & 32$^{*,d2}$ & & & &   \\ \hline
9 & 12-28 & 37-179 & 14-55$^{sc}$ & 64-179 & 17-55$^{sc}$ & 59-64$^{d2}$  & 14-20* & & & \\ \hline
10 & 12-28 & 46-256 & 16-61$^{sc}$ & 64-290 & 17-61$^{sc}$ & 59-89 & 19-56$^{d2}$ & 16$^{e,*}$ & &  \\ \hline
11 & 12-28 & 56-320 & 16-67$^{sc}$ & 64-358 & 17-67$^{sc}$ & 59-179 & 20-56$^{d2}$ & 28-49$^{d2}$ & 12$^{e,*}$ &   \\ \hline
12 & 12-28 & 67-320 & 16-73$^{sc}$ &  64-526 & 17-73$^{sc}$ & 59-213 & 20-73$^{sc}$ & 37-64$^{d2}$ & 18-44$^{d2}$ & 8-9*   \\ \hline
\end{tabular}
}
\end{center}

\begin{center}
{\scriptsize
\noindent
\begin{tabular}{|c|c|c|c|c|c|c|c|c|c|c|}
\hline
\multicolumn{10}{|c|}{$q=4$, $\delta=3$} \\
\hline
$n|d$ & 1 & 2 & 3 & 4 & 5 & 6 & 7 & 8 & 9  \\
\hline
7 & 14-43$^{sc}$ & 12-43$^{sc}$ & 36-52 & 16-31 & & & & &   \\ \hline
8 & 14-49$^{sc}$ & 12-49$^{sc}$ & 81-113 & 18-49$^{sc}$ & 16-49$^{sc}$ & & & &   \\ \hline
9 & 14-55$^{sc}$ & 12-55$^{sc}$ & 81-270 & 19-55$^{sc}$ & 16-55$^{sc}$ & 28-76  & & &  \\ \hline
10 & 15-61$^{sc}$ & 13-61$^{sc}$ & 81-352 & 19-61$^{sc}$ & 16-61$^{sc}$ & 39-216 & 16-61$^{sc}$ & &  \\ \hline
11 & 15-67$^{sc}$ & 14-67$^{sc}$ & 81-511 & 19-67$^{sc}$ & 16-67$^{sc}$ & 47-320 & 17-67$^{sc}$ & 16-60$^{*}$ &  \\ \hline
12 & 18-73$^{sc}$ & 14-73$^{sc}$ & 81-738 & 19-73$^{sc}$ & 16-73$^{sc}$ & 47-779 & 18-73$^{sc}$ & 17-73$^{sc}$ & 15-48$^{d2}$  \\ \hline
\end{tabular}}
\end{center}

\begin{center}
{\scriptsize
\noindent
\begin{tabular}{|c|c|c|c|c|c|c|c|c|c|}
\hline
\multicolumn{9}{|c|}{$q=4$, $\delta=4$} \\
\hline
$n|d$ & 1 & 2 & 3 & 4 & 5 & 6 & 7 & 8  \\
\hline
7 & 14-43$^{sc}$ & 12-43$^{sc}$ & 12-40 & & & & &  \\ \hline
8 & 17-49$^{sc}$ & 14-49$^{sc}$ & 17-49$^{sc}$ & 32-38 & & & & \\ \hline
9 & 17-55$^{sc}$ & 14-55$^{sc}$ & 17-55$^{sc}$ & 64-82 & 12-44 &  & &  \\ \hline
10 & 20-61$^{sc}$ & 16-61$^{sc}$ & 17-61$^{sc}$ & 256-298 & 19-61$^{sc}$ & 20-58 & &  \\ \hline
11 & 22-67$^{sc}$ & 20-67$^{sc}$ & 17-67$^{sc}$ & 256-353 & 19-67$^{sc}$ & 20-67$^{sc}$ & 13-67$^{sc}$ &  \\ \hline
12 & 25-73$^{sc}$ & 25-73$^{sc}$ & 17-73$^{sc}$ & 256-656 & 19-73$^{sc}$ & 20-73$^{sc}$ & 14-73$^{sc}$ & 28-73$^{sc}$  \\ \hline
\end{tabular}
}
\end{center}

\begin{center}
{\scriptsize
\noindent
\begin{tabular}{|c|c|c|c|c|c|c|c|c|}
\hline
\multicolumn{8}{|c|}{$q=4$, $\delta=5$} \\
\hline
$n|d$ & 1 & 2 & 3 & 4 & 5 & 6 & 7  \\
\hline
7 & 9-28 & 14-43$^{sc}$ &  & &  & &   \\ \hline
8 & 13-49$^{sc}$ & 14-49$^{sc}$ & 18-49$^{sc}$ & &  & &   \\ \hline
9 & 17-55$^{sc}$ & 16-55$^{sc}$ & 18-55$^{sc}$ & 16-55$^{sc}$ &  & &   \\ \hline
10 & 21-61$^{sc}$ & 18-61$^{sc}$ & 18-61$^{sc}$ & 17-61$^{sc}$ & 16-82 & &   \\ \hline
11 & 21-67$^{sc}$ & 18-67$^{sc}$ & 18-67$^{sc}$ & 20-67$^{sc}$ & 32-132 & 24-29 &   \\ \hline
12 & 22-73$^{sc}$ & 18-73$^{sc}$ & 18-73$^{sc}$ & 20-73$^{sc}$ & 36-323 & 24-73$^{sc}$ & 14-73$^{sc}$   \\ \hline
\end{tabular}
}
\end{center}

\begin{center}
{\scriptsize
\noindent
\begin{tabular}{|c|c|c|c|c|c|c|c|}
\hline
\multicolumn{7}{|c|}{$q=4$, $\delta=6$} \\
\hline
$n|d$ & 1 & 2 & 3 & 4 & 5 & 6 \\
\hline
7 & 6-14 & &  & & &  \\ \hline
8 & 9-24 & 16-49$^{sc}$ & & & &   \\ \hline
9 & 12-55$^{sc}$ & 16-55$^{sc}$ & 18-55$^{sc}$ & & &   \\ \hline
10 & 15-61$^{sc}$ & 18-61$^{sc}$ & 18-61$^{sc}$ & 16-61$^{sc}$ & &  \\ \hline
11 & 17-67$^{sc}$ & 18-67$^{sc}$ & 18-67$^{sc}$ & 19-67$^{sc}$ & 12$^{e}$-67$^{sc}$ &  \\ \hline
12 & 24-73$^{sc}$ & 18-73$^{sc}$ & 18-73$^{sc}$ & 19-73$^{sc}$ & 14-73$^{sc}$ & 48-152  \\ \hline
\end{tabular}
}
\end{center}

{\bf Acknowledgements.} The first author was partially supported
by the National Scientific Program "Information and Communication
Technologies for a Single Digital Market in Science, Education and
Security (ICTinSES)" of the Bulgarian Ministry of
Education and Science. The second author was supported by the National Programme "Young Scientists and PostDocs"  of the Bulgarian Ministry of
Education and Science. The research of the third and forth authors was carried out at the IITP RAS at the expense
of the Russian Fundamental Research Foundation (project No.
19-01-00364). The authors thank the anonymous reviewers for their careful reading and useful remarks.


\begin{thebibliography}{99}

{\small

\bibitem{BDZH06} {{}  L. A. Bassalygo, S. M. Dodunekov, V. A. Zinoviev, T. Helleseth},
{\em The Gray-Rankin bound for nonbinary codes}, Problems of
Information Transmission, 43 (2006), pp. 37--44.

\bibitem{BJL86} {{} T. Beth, D. Jungnickel, H. Lenz}, {\em Design Theory},
Cambridge University Press, London, 1986.

\bibitem{BDZZ18} {{} P. Boyvalenkov, K. Delchev, D. V. Zinoviev, V. A. Zinoviev},
{\em Codes with two distances: $d$ and $d+1$}, in Proc. 16th International Workshop on
Algebraic and Combinatorial Coding Theory, Sep. 2-8, 2018, pp. 40--45.

\bibitem{BDZZ20} {{} P. Boyvalenkov, K. Delchev, D. V. Zinoviev, V. A. Zinoviev},
{\em On $q$-ary codes with two distances $d$ and $d+1$}, Problems of Information Transmission, 56 (2020), pp. 33--44.

\bibitem{BDHSS-ieee} {{} P. Boyvalenkov, P. Dragnev, D. Hardin, E. Saff, M. Stoyanova},
{\em Universal bounds for size and energy of codes of given minimum and maximum distances},
submitted (arXiv:1910.07274).

\bibitem{Bro} {{} A. E. Brouwer}, {\em Tables of bounds for $q$-ary codes},
 http:www.win.tue.nl/\~{}aeb/

\bibitem{B52}  {{} K. A. Bush}, {\em Orthogonal arrays of index unity},
Ann. Math. Stat. 1952, vol. 23, pp. 426--434.

\bibitem{CK86} {{} R. Calderbank, W. M. Kantor}, {\em The geometry of
two-weight codes}, Bulletin of the London Mathematical Society, 18 (1986), pp. 97--122.

\bibitem{Del72} {{} P. Delsarte}, {\em Weights of linear codes and
strongly regular normed spaces}, Discrete Mathematics, 3 (1972), pp. 47--64.

\bibitem{Del73} {{} P. Delsarte}, {\em An Algebraic Approach to the
Association Schemes in Coding Theory}, Philips Res. Rep. Suppl. 10, 1973.

\bibitem{Den69} {{} R. H. F. Denniston}, {\em Some maximal arcs in finite
projective planes}, J. Combinatorial Th. 6 (1969), pp. 317--319.

\bibitem{DHZ04} {{} S. M. Dodunekov, T. Helleseth, V. A. Zinoviev},
{\em On $q$-ary Grey-Rankin bound and codes meeting this bound},
in Proc. 2004 IEEE Int. Symposium on Information Theory,
June 27-July 2, 2004, p. 528.

\bibitem{Far70} {{} P. G. Farrell}, {\em Linear binary anticodes},
Electronics Letters, 6 (1970), pp. 419--421.

\bibitem{HH01} {{} N. Hamada, T. Helleseth}, {\em Codes and minihypers},
in: Proceedings of the EuroWorkshop on Optimal Codes
and Related Topics, Sunny Beach, Bulgaria, June 10-16, 2001, pp. 79--84.


\bibitem{HKL06} {{} T. Helleseth, T. Kl\o{}ve, V. I. Levenshtein}, {\em A bound
for codes with given minimum and maximum distances}, in Proc. IEEE Intern. Symposium on Information Theory, 2006, pp. 292--296.


\bibitem{LRS19}  {{} I. Landjev, A. Rousseva, L. Storme}, {\em On linear codes of almost constant weight and the related
arcs}, Comp. Rend. Bulg. Acad. Sci., 72 (2019), pp. 1626--1633.

\bibitem{LRS77} {{} D. G. Larman, C. A. Rogers, J. J. Seidel}, {\em On two distance sets in Euclidean space},
Bull. London Math. Soc., 9 (1977), pp. 261--267.

\bibitem{Lev95} {{} V. I. Levenshtein}, {\em Krawtchouk polynomials and universal bounds for codes and designs
in Hamming spaces}, IEEE Trans. Inform. Theory. 41 (1995), pp. 1303--1321.

\bibitem{Neu81} {{} A. Neumaier}, {\em Distance matrices, dimension, and conference
graphs}, Indag. Math. 43 (1981), pp. 385--391.

\bibitem{SZZ69} {{} N. V. Semakov, V. A. Zinoviev, V. G. Zaitsev}, {\em Class of maximal
equidistant codes}, Problems of Information Transmission, 2
(1969), pp. 84--87.

\bibitem{Tha95} {{} J. A. Thas}, {\em Projective geometry over a finite field},
Chapter 7, pp. 295--347, in: Handbook of Incidence Geometry, ed. F. Buekenhout,
Elsevier Science B. V., 1995.
}


\end{thebibliography}
\end{document}